\newtheorem{theorem}{Theorem}[section]
\newtheorem{lemma}[theorem]{Lemma}
\def\Rev{\textit{rev}}		% reverse of a dart
\def\Head{\textit{head}}		% head of a dart
\def\Tail{\textit{tail}}		% tail of a dart
\def\Torus{\mathbb{T}}		% Torus
\let\Univ\widetilde			% Universal cover
\def\EMPH#1{\textcolor{BrickRed}{\emph{#1}}}
\definecolor{TODOcolor}{cmyk}{0.05,0,0,0}
\definecolor{TODOtxtcolor}{cmyk}{0,1,1,0}
\newproof{TODO}{\color{TODOtxtcolor}\small\sffamily\bfseries TODO}(\color{TODOtxtcolor}\small\sffamily\parindent1.5em)
\begin{document}

\title{Planar and Toroidal Morphs Made Easier}
\author{%
\href{http://jeffe.cs.illinois.edu}{Jeff Erickson}%\orcidID{0000-0002-5253-2282}
\qquad\qquad
\href{https://patrickl.in}{Patrick Lin}%\orcidID{0000-0003-4215-2443}%
\\[1ex]
University of Illinois Urbana-Champaign
}
%\authorrunning{Jeff Erickson and Patrick Lin}

%\institute{University of Illinois, Urbana-Champaign, USA\\
%\email{\{jeffe,plin15\}@illinois.edu}}

%\DRAFT
\date{}
\maketitle

\begin{abstract} \parindent1.5em\noindent
We present simpler algorithms for two closely related morphing problems, both based on the barycentric interpolation paradigm introduced by Floater and Gotsman, which is in turn based on Floater’s asymmetric extension of Tutte’s classical spring-embedding theorem.

First, we give a very simple algorithm to construct piecewise-linear morphs between planar straight-line graphs.  Specifically, given isomorphic straight-line drawings $\Gamma_0$ and $\Gamma_1$ of the same 3-connected planar graph $G$, with the same convex outer face, we construct a morph from $\Gamma_0$ to $\Gamma_1$ that consists of $O(n)$ unidirectional morphing steps, in $O(n^{1+\omega/2})$ time.  Our algorithm entirely avoids the classical edge-collapsing strategy dating back to Cairns; instead, in each morphing step, we interpolate the pair of weights associated with a single edge.

Second, we describe a natural extension of barycentric interpolation to geodesic graphs on the flat torus.  Barycentric interpolation cannot be applied directly in this setting, because the linear systems defining intermediate vertex positions are not necessarily solvable.  We describe a simple scaling strategy that circumvents this issue. Computing the appropriate scaling requires $O(n^{\omega/2})$ time, after which we can can compute the drawing at any point in the morph in $O(n^{\omega/2})$ time.  Our algorithm is considerably simpler than the recent algorithm of Chambers \etal\ and produces more natural morphs.  Our techniques also yield a simple proof of a conjecture of Connelly \etal\ for geodesic torus triangulations.

% It remains an open question whether both of these simplifications can be achieved simultaneously.

%\keywords{Morphing, spring embedding, planar graphs, toroidal graphs}
\end{abstract}

% ------------------------------------------------------------------------------------------
\section{Introduction}

Computing morphs between geometric objects is a fundamental problem that has been well studied, with many applications in graphics, animation, modeling, and more.  A particularly well-studied setting is that of morphing between planar straight-line graphs. Formally, a morph between two isomorphic planar straight-line graphs $\Gamma_0$ and $\Gamma_1$ consists of a continuous family of planar straight-line graphs $\Gamma_t$ starting at $\Gamma_0$ and ending at $\Gamma_1$.

We describe an extremely simple morphing algorithm for planar graphs, which simultaneously obtains properties of two earlier approaches: Floater and Gotsman's barycentric interpolation method~\cite{fg-mti-99,gs-gipm-01,sg-cmcpt-01,sg-msfuo-01,sg-imct-03} results in morphs that are natural and visually appealing but are represented implicitly; variations on Cairns' edge-collapse method~\cite{c-dprc-44,c-idgc2-44,t-dpg-83,aabcd-hmpgd-17,kklss-cimpg-19} result in efficient explicit representations of morphs that are not useful for visualization.  Our new algorithm efficiently computes an explicit piecewise-linear representation of a morph between drawings of the same $3$-connected planar graph, that are potentially more useful for visualization than morphs based on Cairns' method.

We also extend Floater and Gotsman's planar morphing algorithm to geodesic graphs on the flat torus.  Recent results of Luo, Wu, and Zhu \cite{lwz-dsgtg-21} imply that Floater and Gotsman's method directly generalizes to morphs between geodesic triangulations on surfaces of negative curvature, but a direct generalization to the torus generically fails~\cite{sf-ppc2m-04}.  Our extension is based on simple scaling strategy, and it yields more natural morphs than previous algorithms based on edge collapses~\cite{celp-hmgt-21}.  %\footnote{It should be noted, however, that Chambers \etal~produce a piecewise-linear morph of linear complexity; our morph does not enjoy this property.} 
 Finally, our arguments yield a straightforward proof of a conjecture of Connelly, Henderson, Ho, and Starbird \cite{chhs-prlhe-83} on the structure of the deformation space of geodesic triangulations of the torus.

\subsection{Related Work}

\subsubsection{Planar Morphs}

The history of morphing arguably begins with Steinitz \cite[p.~347]{sr-vtp-34}, who proved that any 3-dimensional convex polyhedron can be continuously formed into any other convex polyhedron with the same 1-skeleton.

Cairns \cite{c-dprc-44,c-idgc2-44} was the first to prove the existence of morphs between  arbitrary isomorphic planar straight-line triangulations, using an inductive argument based on the idea of collapsing an edge from a low-degree vertex to one of its neighbors.  Thomassen \cite{t-dpg-83} extended Cairns’ proof to arbitrary planar straight-line graphs.  Cairns and Thomassen’s proofs are constructive, but yield morphs consisting of an exponential number of steps.

Floater and Gotsman \cite{fg-mti-99} proposed a more direct method to construct morphs between planar graphs, based on an extension by Floater \cite{f-ptsda-98} of Tutte’s classical spring embedding theorem \cite{t-hdg-63}.  Let $\Gamma$ be a straight-line drawing of a planar graph~$G$, such that the boundary of every face of $\Gamma$ is a strictly convex polygon.  Then every interior vertex in $\Gamma$ is a strict convex combination of its neighbors; that is, we can associate a positive weight $\lambda_{\arc{u}{v}}$ with each half-edge or \emph{dart} $\arc{u}{v}$ in $G$, such that the vertex positions $p_v$ in $\Gamma$ satisfy the linear system
\begin{equation}
	\sum_{\arc{u}{v}}^{\phantom{.}} \lambda_{\arc{u}{v}} (p_v - p_u) = (0,0)
	\qquad \text{for every interior vertex $u$}		% <-- Note u here, not v!
	\label{eq:floater}
\end{equation}
Floater \cite{f-ptsda-98} proved that given arbitrary%
\footnote{Floater’s presentation assumes that $\sum_{\arc{u}{v}} \lambda_{\arc{u}{v}} = 1$ for every interior vertex $v$, but this assumption is clearly unnecessary.}
positive weights $\lambda_{\arc{u}{v}}$ and an arbitrary convex outer face, solving linear system~\eqref{eq:floater} yields a straight-line drawing of~$G$ with convex faces.  Tutte’s original spring-embedding theorem \cite{t-hdg-63} is the special case of this result where every dart has weight $1$, but his proof extends verbatim to arbitrary symmetric weights, where $\lambda_{\arc{u}{v}} = \lambda_{\arc{v}{u}}$ for every edge $uv$~\cite{hk-prga-92,r-rsp-96,t-tst-04}.

Floater and Gotsman \cite{fg-mti-99} construct a morph between two convex drawings of the same planar graph $G$, with the same outer face, by linearly interpolating between weights $\lambda_{\arc{u}{v}}$ consistent with the initial and final drawings.  Appropriate initial and final weights can be computed in $O(n)$ time using, for example, Floater’s mean-value coordinates~\cite{f-mvc-03,hf-mvcap-06}.  The resulting morphs are natural and visually appealing.  However, the motions of the vertices are only computed implicitly; vertex positions at any time can be computed in $O(n^{\omega/2})$ time by solving a linear system via nested dissection~\cite{lrt-gnd-79,ay-msnda-13}, where $\omega < 2.37286$ is the matrix multiplication exponent~\cite{l-ptfmm-14,av-rlmfm-21}.%
\footnote{By solving the underlying linear system symbolically, it is possible to express the motion of each vertex as a rational function of degree $n-1$, but computing vertex coordinates at any particular time from this representation requires $O(n^2)$ time.}
Gotsman and Surazhsky generalized Floater and Gotsman’s technique to arbitrary planar straight-line graphs \cite{sg-msfuo-01,gs-gipm-01,sg-cmcpt-01,sg-imct-03}.

%\PL{Tighten up the following two paragraphs (and psosibly the previous ones)}
A long series of later works, culminating in a paper by
Alamdari, Angelini, Barrera-Cruz, Chan, Da Lozzo, Di Battista, Frati, Haxell, Lubiw, Patrignani, Roselli, Singla, and Wilkinson~\cite{aabcd-hmpgd-17}, describe an efficient algorithm to construct planar morphs with explicit piecewise-linear vertex trajectories, all ultimately
based on Cairns’ inductive edge-collapsing strategy.
% Moreover, the algorithm of Alamdari \etal~works for \emph{arbitrary} planar graphs, including disconnected ones.
%(Alamdari \etal~\cite{aabcd-hmpgd-17} and Roselli \cite{r-mvdg-14} provide more detailed history of these results.)
Given any two isomorphic straight-line drawings (with the same rotation system and nesting structure) of the same $n$-vertex planar graph, %these papers describe how to
the resulting algorithm constructs a morph consisting of $O(n)$ \emph{unidirectional} morphing steps, in which all vertices move along parallel lines at fixed speeds.  Thus, each vertex moves along a piecewise-linear path of complexity $O(n)$, and the entire morph has complexity $O(n^2)$. 
%\JE{Kleist Floaterized Hong-Nagamochi}
% Alamdari \etal~analysed the algorithm as taking $O(n^3)$ time in an appropriate real RAM model of computation.
These results require several delicate arguments; in particular, to perturb the \emph{pseudomorphs} defined by edge collapses and their reversals into true morphs.
%A recent result by
Recent results of Klemz~\cite{k-cdhgl-21} imply that this algorithm can be implemented to run in $O(n^2\log n)$ time on an appropriate real RAM model of computation.
The resulting morph contracts all vertices into an exponentially small neighborhood and then expand them again, so it is not useful for visualization.

Angelini, Da Lozzo, Frati, Lubiw, Patrignani, and Roselli
Angelini~\cite{alflp-omcd-15} consider the setting of \emph{convexity-preserving} morphs between convex drawings; 
Kleist, Klemz, Lubiw, Schlipf, Staals, and Strash~\cite{kklss-cimpg-19} consider morphing to convexify any 3-connected planar drawing.  Both describe algorithms that produce piecewise-linear morphs consisting of $O(n)$ steps, and that can be implemented to run in time $O(n^{1+\omega/2})$.  (Klemz~\cite{k-cdhgl-21} conjectures that both running times can be improved to $O(n^2\log n)$.)  Combining these algorithms results in an alternative piecewise-linear morph between 3-connected planar drawings.

\subsubsection{Toroidal Morphs}

Until recently, very little was known about morphing graphs on the torus or other more complex surfaces. 

Tutte’s spring-embedding theorem was generalized to simple triangulations of surfaces with non-positive curvature by Colin de Verdière \cite{c-crgtd-91} and independently by Hass and Scott~\cite{hs-seshm-15}.  Delgado-Friedrichs~\cite{d-eppgc-04}, Lovász \cite{l-dafe-04}, and %Gortler, Gotsman, and Thurston
Gortler \etal~\cite{ggt-domam-06} also independently proved an extension of Tutte’s theorem to graphs on the flat torus whose universal covers are simple and 3-connected.
For any toroidal graph and any assignment of positive \emph{symmetric} weights to the darts, solving a linear system similar to~\eqref{eq:floater} yields vertex positions of a geodesic drawing with strictly convex faces \cite{ggt-domam-06,el-tmcdc-20}; see Section \ref{S:defs} for details.
%
%Indeed, this linear system always has  a two-dimensional space of solutions that differ by translation.
%
Thus, if two isotopic geodesic torus graphs $\Gamma_0$ and $\Gamma_1$ can both be described by symmetric dart weights, linearly interpolating those weights yields a morph from $\Gamma_0$ to~$\Gamma_1$~\cite{cpv-tbmai-03}; in light of the authors’ toroidal  Maxwell--Cremona correspondence \cite{el-tmcdc-20}, this can be seen as a natural toroidal analogue of Steinitz’s theorem on morphing convex polyhedra \cite{sr-vtp-34}.

The restriction to symmetric weights is both nontrivial and significant.  In a torus graph with convex faces, every vertex can be described as a convex combination of its neighbors, but not necessarily with symmetric weights.  Moreover, the linear system expressing vertex positions as convex combinations of its neighbors is 
rank-deficient, and therefore is
not solvable in general; see Appendix~\ref{A:no-torus-fg} for an example. Thus, Floater’s asymmetric extension of Tutte’s theorem does \emph{not} directly generalize to the flat torus.

For similar reasons, Floater and Gotsman’s planar morphing algorithm also does not generalize.  Suppose we are given two isotopic geodesic torus graphs~$\Gamma_0$ and $\Gamma_1$, each with dart weights 
% convex faces.  As in the plane, we can compute weights for the darts
that express their vertices as convex combinations of their neighbors. %in $O(n)$ time.
Unfortunately, in general, interpolating those weights yields linear systems that have no solution; we give a simple example in Appendix~\ref{A:no-torus-fg}.

Steiner and Fischer \cite{sf-ppc2m-04} modify the system by fixing a single vertex, restoring full rank.  However, solving the modified system does not necessarily yield a crossing-free drawing, because the fixed vertex may not lie in the convex hull of its neighbors.%
\footnote{Steiner and Fischer incorrectly claim \cite[Section~2.2.1]{sf-ppc2m-04} that the resulting drawing has no “foldovers” except at the fixed vertex and its neighbors; see Appendix~\ref{A:no-torus-fg}.}
Moreover, even though the initial and final weights are consistent with crossing-free drawings, averages of those weights may not be.  We give an example of this bad behavior in Appendix~\ref{A:no-torus-sf}.

Chambers, Erickson, Lin, and Parsa~\cite{celp-hmgt-21} described the first algorithm to morph between arbitrary essentially 3-connected geodesic torus graphs.  Their algorithm uses a %delicate
combination of Cairns’ edge-collapsing strategy and spring embeddings to construct a morph consisting of $O(n)$ unidirectional morphing steps, in $O(n^{1+\omega/2})$ time.  %But
Like planar morphs built from edge collapses, these toroidal morphs contract vertices into small neighborhoods and thus are not suitable for visualization.

Recently, Luo \etal ~\cite{lwz-dsgtg-21} %proved a generalization of 
generalized Floater’s theorem to geodesic triangulations of arbitrary closed Riemannian 2-manifolds with strictly negative curvature, extending the spring-embedding theorems of Colin de Verdière \cite{c-crgtd-91} and Hass and Scott~\cite{hs-seshm-15} to asymmetric weights.  Their result immediately implies that if two geodesic triangulations of such a surface are homotopic, then linearly interpolating the dart weights yields a continuous family of crossing-free geodesic drawings, or in other words, a morph. 
%By the Gauss-Bonnet theorem,
Their result applies only to surfaces with negative Euler characteristic; alas,
the torus has Euler characteristic $0$.

%
% Naah. \JE{Maybe put this somewhere?}
%
%Cairns \cite{c-idgc2-44} also proved the existence of morphs between geodesic triangulations of the sphere.  Kobourov and Landis \cite{kl-mpgss-06} described a partial extension to geodesic graphs on the sphere that are central projections of the convex hulls of their vertices.  However, as Cairns points out \cite{c-idgc2-44}, a similar extension already follows from Steinitz’s result about morphing convex polyhedra~\cite{sr-vtp-34}.

\subsection{New Results}

%In this paper,
We describe two applications of Floater and Gotsman’s barycentric interpolation strategy, which yield simpler algorithms for morphing planar and toroidal graphs.

First we describe a very simple algorithm to construct piecewise-linear morphs between planar straight-line graphs.  Given two isomorphic planar straight-line graphs $\Gamma_0$ and~$\Gamma_1$ with strictly convex faces and the same outer face, we construct a morph from $\Gamma_0$ to $\Gamma_1$ that consists of $O(n)$ unidirectional morphing steps, in $O(n^{1+\omega/2})$ time.
Our morphing algorithm computes barycentric weights for the darts in $\Gamma_0$ and $\Gamma_1$ in a preprocessing phase, and then for each morphing step, interpolates only the pair of weights associated with a single edge.
%
%Our morphing algorithm replaces Cairns’ edge collapses with an application of Floater and Gotsman’s barycentric interpolation.  Specifically, we compute barycentric weights for the darts in $\Gamma_0$ and $\Gamma_1$ in a preprocessing phase, and then for each morphing step, we interpolate the pair of weights associated with a single edge.
%
Our key observation is that changing the weights for a single edge $e$ moves all vertices in the Floater drawing along lines parallel to~$e$.  (The same observation was made for \emph{symmetric} edge weights by Chambers \etal~\cite{celp-hmgt-21}.)  Our algorithm is significantly simpler than that of Angelini \etal~\cite{alflp-omcd-15} for computing convexity-preserving morphs.  We then extend our algorithm to drawings with non-convex faces, using a simpler approach than Kleist \etal~\cite{kklss-cimpg-19}.
Fig.~\ref{F:plane-example} shows a morph computed by our algorithm; in each frame, the weights of the red edge are about to change.

\begin{figure}[htb]
\centering
\includegraphics[width=0.19\linewidth]{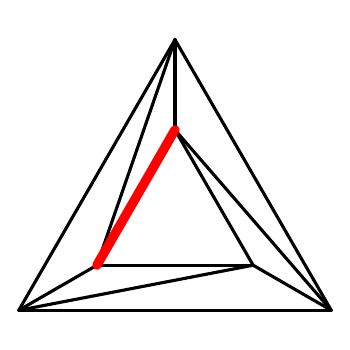}
\includegraphics[width=0.19\linewidth]{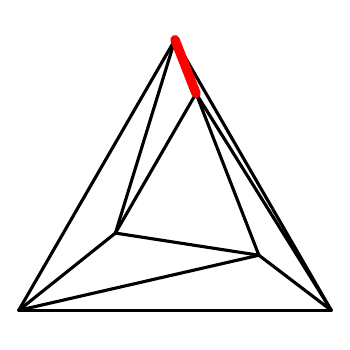}
\includegraphics[width=0.19\linewidth]{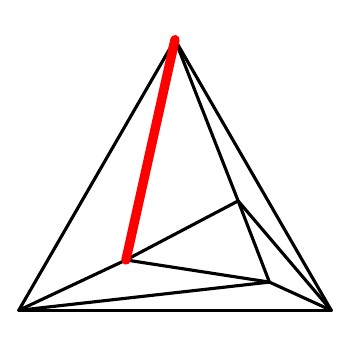}
\includegraphics[width=0.19\linewidth]{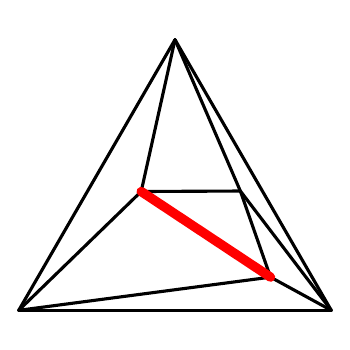}
\includegraphics[width=0.19\linewidth]{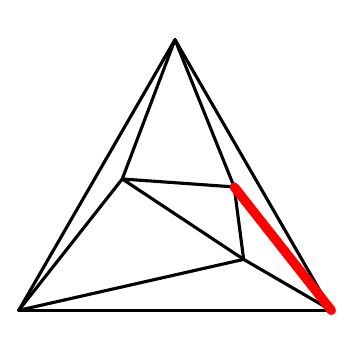}\\
\includegraphics[width=0.19\linewidth]{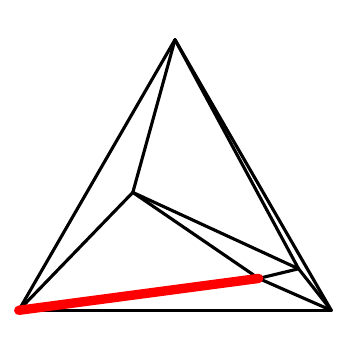}
\includegraphics[width=0.19\linewidth]{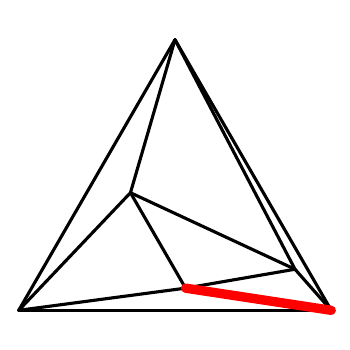}
\includegraphics[width=0.19\linewidth]{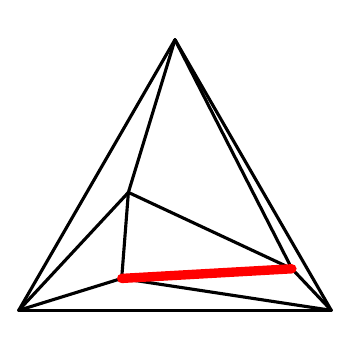}
\includegraphics[width=0.19\linewidth]{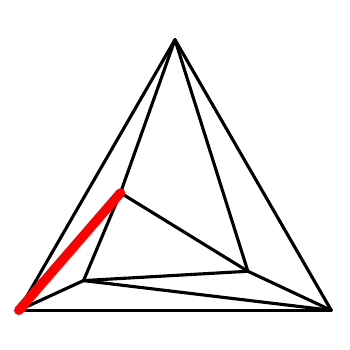}
\includegraphics[width=0.19\linewidth]{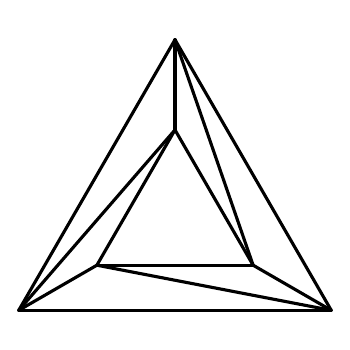}
\caption{Incrementally morphing between planar graphs.}
\label{F:plane-example}
\end{figure}

Next, we describe a natural extension of Floater and Gotsman's method to geodesic graphs on the flat torus.  Our key observation is that barycentric dart weights can be \emph{scaled} so that barycentric interpolation works.  Specifically, we call a weight assignment \emph{morphable} if every \emph{column} of the resulting Laplacian linear system sums to zero; averages of morphable weights are morphable.  Given any weight assignment consistent with any convex drawing, we can guarantee morphability by scaling the weights of all darts leaving each vertex $v$---or equivalently, scaling each \emph{row} of the linear system---by a common positive scalar~$\alpha_v$.  This scaling obviously has no effect on the solution space of the system.  Positivity of the scaling vector $\alpha$ follows from a weighted directed version of the matrix-tree theorem \cite{t-detie-48,b-ueied-60,l-epmtt-20}.  We can computing the appropriate scaling in $O(n^{\omega/2})$ time, after which we can compute any intermediate drawing in $O(n^{\omega/2})$ time, matching the performance of Floater and Gotsman exactly.  The resulting morphs are natural and visually appealing, and our proofs of correctness are considerably simpler than those of Chambers \etal~\cite{celp-hmgt-21}.  However, unlike Chambers \etal, our new morphing algorithm does not compute explicit vertex trajectories.
Fig.~\ref{F:example} shows a morph computed by our algorithm between two randomly shifted $6\times 6$ toroidal grids.  (The authors’ Python implementation is available on request.)

\begin{figure}[htb]
\centering
\includegraphics[width=0.18\linewidth]{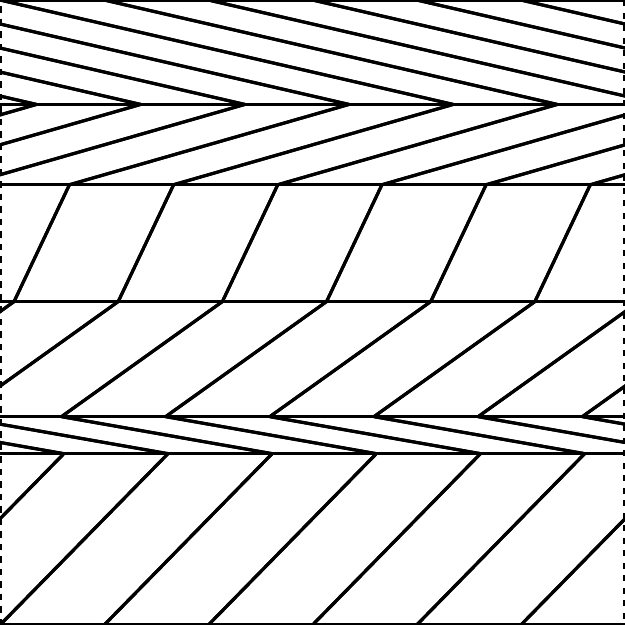}\hfill
\includegraphics[width=0.18\linewidth]{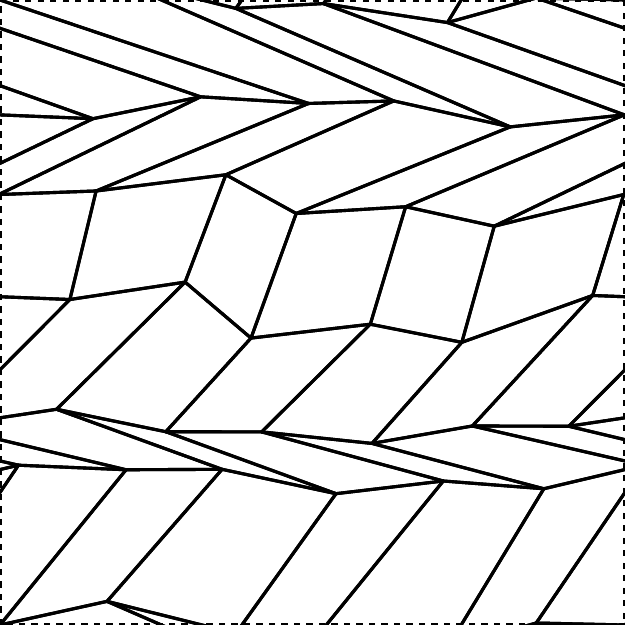}\hfill
\includegraphics[width=0.18\linewidth]{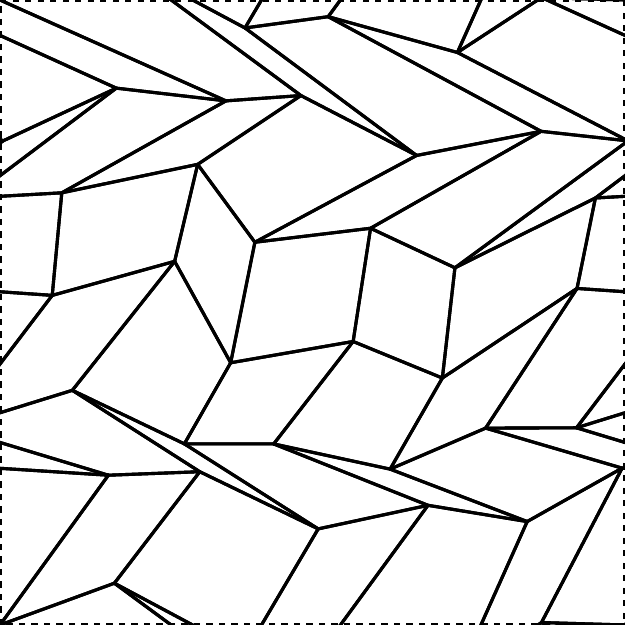}\hfill
\includegraphics[width=0.18\linewidth]{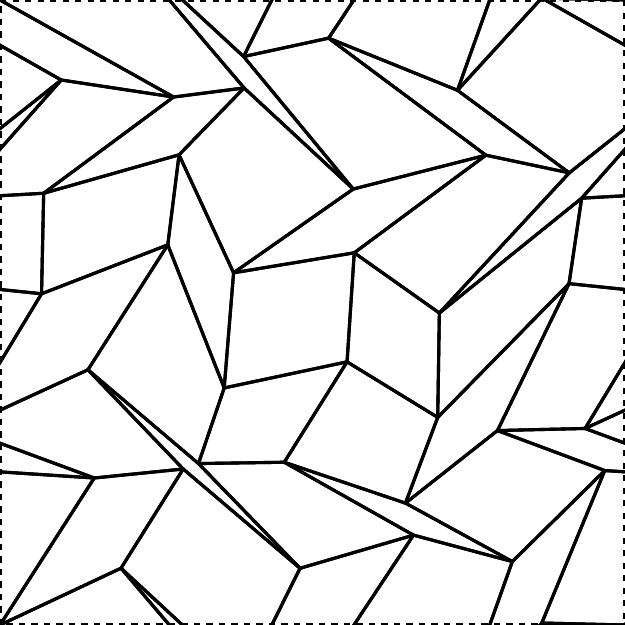}\hfill
\includegraphics[width=0.18\linewidth]{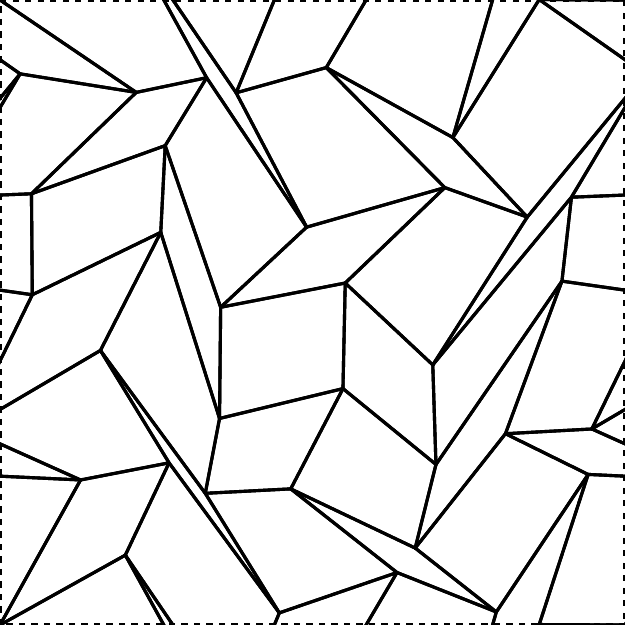}\\[2ex]
\includegraphics[width=0.18\linewidth]{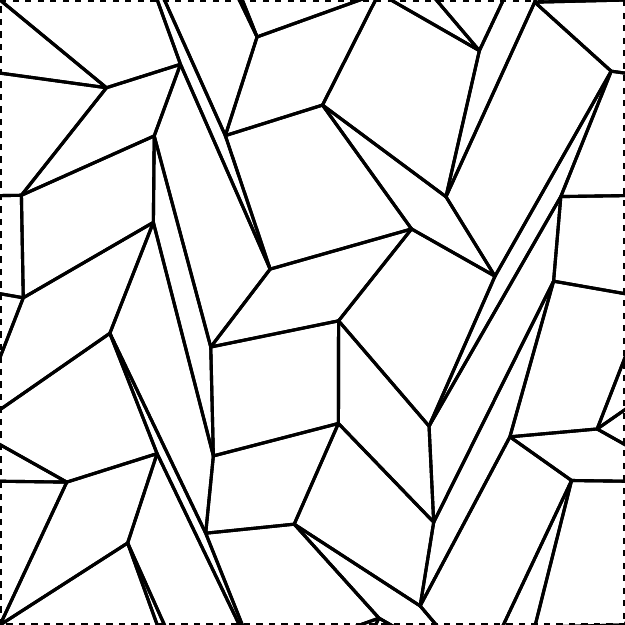}\hfill
\includegraphics[width=0.18\linewidth]{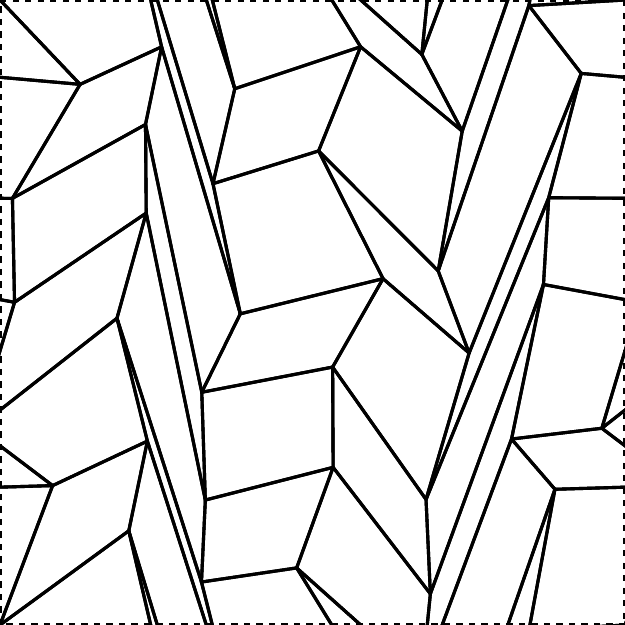}\hfill
\includegraphics[width=0.18\linewidth]{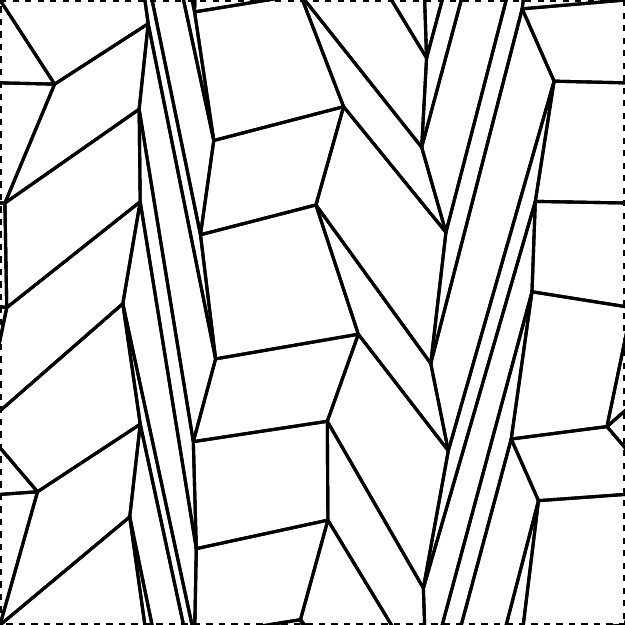}\hfill
\includegraphics[width=0.18\linewidth]{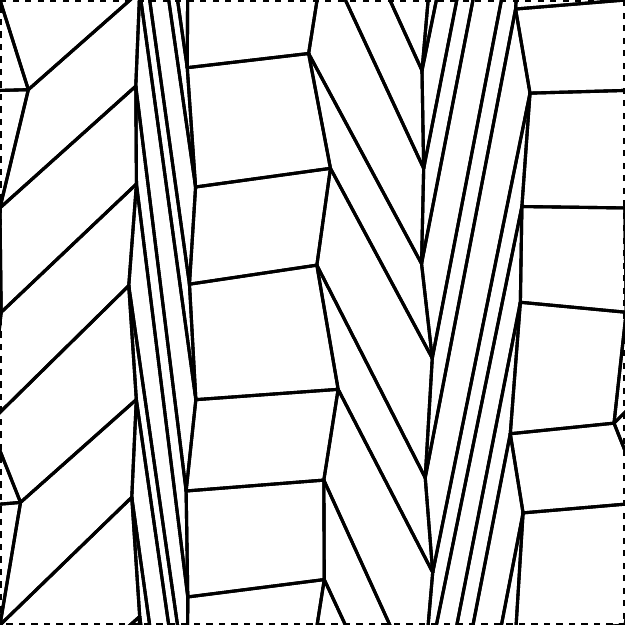}\hfill
\includegraphics[width=0.18\linewidth]{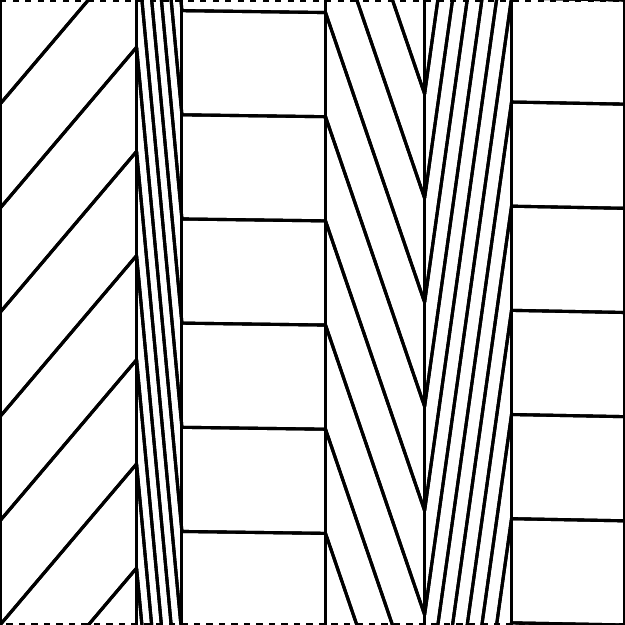}
\caption{Morphing between randomly shifted toroidal grids.}
\label{F:example}
\end{figure}

It remains an open question whether our results can be combined to compute explicit low-complexity piecewise-linear toroidal morphs without edge collapses.  We offer some preliminary observations in Appendix~\ref{A:no-torus-edgebyedge}.

% ------------------------------------------------------------------------------------------
\section{Definitions and Notation}
\label{S:defs}

\subsection{Planar Graphs}

Any planar straight-line drawing $\Gamma$ can be represented by a \EMPH{position} matrix $P \in \Real^{n\times 2}$, each row~$p_v$ of which gives the location of some vertex $v$. Thus, each edge $uv$ is drawn as the straight-line segment $p_up_v$.  We call a planar drawing  \EMPH{convex} if it is crossing-free, every bounded face is a convex polygon, and the outer face is the complement of a convex polygon.

Formally, we regard each edge of any graph as a pair of opposing half-edges or \emph{darts}, each directed from its \emph{tail} to its \emph{head}.  We write \EMPH{$\Rev(d)$} to denote the reversal of any dart $d$.  For simple graphs, we write \EMPH{$\arc{u}{v}$} to denote the dart with tail $u$ and head $v$.  A \EMPH{barycentric weight vector} for~$\Gamma$ assigns a positive real number $\lambda_{\arc{u}{v}}$ to every dart $\arc{u}{v}$ of a graph, so that the vertex positions~$p_v$ satisfy Floater’s linear system \eqref{eq:floater}.  Conversely, for a fixed graph $G$ with a fixed convex outer face, the \EMPH{Floater drawing $\Gamma^\lambda$} of $G$ with respect to a positive weight vector $\lambda$ is the unique drawing whose vertex positions $p_v$ satisfy system \eqref{eq:floater}.

A \EMPH{morph} between two planar drawings $\Gamma_0$ and $\Gamma_1$ is a continuous family of crossing-free drawings $\Gamma_t$ parametrized by time, starting at $\Gamma_0$ and ending at~$\Gamma_1$.  A morph is \emph{linear} if each vertex moves along a straight line at uniform speed, and \emph{piecewise-linear} if it is the concatenation of linear morphs.  Any piecewise-linear morph can be described by a finite sequence of straight-line drawings, or their position matrices.
  A linear morph is \emph{unidirectional} if vertices move along parallel lines.

\subsection{Torus Graphs}

The \EMPH{flat torus} is the quotient space $\Torus = \Real^2 / \Z^2$, also obtained by identifying opposite sides of the unit square $[0,1]^2$.  A \EMPH{geodesic} on the flat torus is the image of a line segment in $\Real^2$ under the projection map $\pi\colon \Real^2 \to \Torus$ where $\pi(x,y) = (x\bmod 1, y\bmod 1)$.

A (crossing-free) \EMPH{geodesic torus drawing} $\Gamma$ of a graph $G$ maps its vertices to distinct points in~$\Torus$ and its edges to simple, interior-disjoint geodesics.  We explicitly consider geodesic drawings of graphs with loops and parallel edges.  We write \EMPH{$d:\arc{u}{v}$} to declare that $d$ is a dart with tail $u$ and head $v$; we emphasize that (unlike in planar setting) there may be more than one such dart. 

Every geodesic torus drawing $\Gamma$ of a graph $G$ is the projection of an infinite, doubly-periodic planar straight-line graph $\Univ\Gamma$, called the \EMPH{universal cover} of $\Gamma$~\cite{celp-hmgt-21}.  We call $\Gamma$ \EMPH{essentially simple} if its universal cover $\Univ\Gamma$ is simple, and \EMPH{essentially 3-connected} if $\Univ\Gamma$ is 3-connected~\cite{m-cpmec-97,m-cpmpt-97}.  Finally, we call $\Gamma$ a \EMPH{convex} drawing if every face of $\Univ\Gamma$ is strictly convex.  Every convex torus drawing is both essentially simple and essentially 3-connected, since every infinite planar graph with strictly convex faces is 3-connected \cite{d-bdpg-03}.

\paragraph{Coordinate representations.}

Following Chambers \etal~\cite{celp-hmgt-21}, we use a \EMPH{coordinate representation} $(P, \tau)$ for geodesic torus drawings that records
\begin{itemize}
\item 
a \EMPH{position} vector $p_v \in \Real^2$ for each vertex $v$, and
\item
a \EMPH{translation} vector $\tau_d \in \Z^2$ for each dart $d$, such that $\tau_{\Rev(d)} = -\tau_d$.
\end{itemize}
These vectors indicate that each dart $d : \arc{u}{v}$ is drawn as the projection of a line segment from~$p_u$ to $p_v+\tau_d$ in the universal cover $\Univ\Gamma$.  In particular, if we normalize all vertex positions to the half-open unit square $[0,1)^2$, then each translation vector $\tau_d$ indicates the number of times $d$ crosses the vertical boundary of the unit square to the right, and the number of times~$d$ crosses the horizontal boundary of the unit square upward.

Two crossing-free drawings of the same graph on the torus are \EMPH{isotopic} if one can be deformed into the other through a continuous family of (not necessarily geodesic) crossing-free drawings; such a deformation is called an \EMPH{isotopy}.  Two crossing-free drawings are isotopic if and only if their coordinate representations can be \EMPH{normalized} so that their translation vectors agree; this condition can be tested in $O(n)$ time~\cite[Theorem A.1]{celp-hmgt-21},~\cite{cm-tgis-14}.  A \EMPH{geodesic isotopy} or \EMPH{morph} is an isotopy in which all intermediate drawings are geodesic.

\paragraph{Barycentric weights.}

In any convex torus drawing $\Gamma$, the position $p_v$ of each vertex $v$ can be expressed as a convex combination of its neighbors, as follows.  We can assign a weight~$\lambda_d > 0$ to each dart $d$ such that any coordinate representation $(P, \tau)$ of $\Gamma$ satisfies the linear system
\begin{equation}
	\sum_{\vphantom{d}v} \sum_{d:\arc{u}{v}} \lambda_d (p_v - p_u + \tau_d) = (0,0)
	\qquad \text{for every vertex $u$.}		% <-- Note u here, not v!
	\label{eq:ggt}
\end{equation}
We can express this linear system in matrix notation as $L^{\lambda}P = H^{\lambda}$, where
\begin{equation}
	\begin{aligned}
	L^{\lambda}_{ij} &=
		\begin{cases}
			\displaystyle \sum_k \Sum_{d :\arc{i}{k}} \lambda_d & \text{if $i = j$} \\[3ex]
			\displaystyle \sum_{d:\arc{i}{j}} -\lambda_d & \text{otherwise}
		\end{cases} &
	\quad\text{and}\quad
	H^{\lambda}_i &= \sum_j \Sum_{d:\arc{i}{j}}  \lambda_d x_d
	\end{aligned}
	\tag{$2’$}
	\label{eq:ggt-laplacian}
\end{equation}
The (unnormalized, asymmetric) Laplacian matrix $L^\lambda$ has rank $n-1$~\cite{sf-ppc2m-04}.  We call any positive weight vector $\lambda$ satisfying system~\eqref{eq:ggt} \EMPH{barycentric} for $\Gamma$.  Barycentric weights for any convex torus drawing can be computed in $O(n)$ time using, for example, Floater’s mean-value coordinates~\cite{f-mvc-03,hf-mvcap-06}.

On the other hand, suppose we fix the graph $G$ and translation vectors $\tau_d$ consistent with an essentially 3-connected (but not necessarily geodesic) drawing of $G$.  Then for any positive weight vector $\lambda$, any solution to linear system \eqref{eq:ggt} gives the vertex positions~$p_v$ of a convex drawing $\Gamma^\lambda$ of~$G$~\cite{ggt-domam-06}.  In this case, we say that the \EMPH{Floater drawing}~$\Gamma^\lambda$ \EMPH{realizes} the weight vector~$\lambda$, and we call the weight vector $\lambda$ \EMPH{realizable} for the graph $G$.  Every realizable weight vector is realized by a two-dimensional family of drawings that differ by translation.%; we can remove this ambiguity by arbitrarily fixing the position of one vertex \cite{sf-ppc2m-04,ggt-domam-06}.

Every \emph{symmetric} positive weight vector (where $\lambda_d = \lambda_{\Rev(d)}$) is realizable: for any assignment of positive weights to the \emph{edges} of $G$, there is a corresponding convex torus drawing~\cite{c-crgtd-91, d-eppgc-04, l-dafe-04, ggt-domam-06, hs-seshm-15}.  %However
Realizable weights are not necessarily symmetric: there are convex torus drawings with only asymmetric barycentric weights.  Conversely, positive asymmetric weights are not always realizable.

%
%\JE{This goes somewhere!}  The support of this linear system is the toroidal graph $G$, which has balanced separators of size $O(\sqrt{n})$~\cite{ad-lapeg-96}.  Thus, the linear system can be solved in $O(n^{\omega/2})$ time using the generalized nested dissection technique of Lipton et al.~\cite{lrt-gnd-79}, where $\omega < 2.37287$ is the matrix multiplication exponent~\cite{l-ptfmm-14}. 

% ------------------------------------------------------------------------------------------

% ------------------------------------------------------------------------------------------
\section{Morphing Planar Graphs Edge by Edge}
\def\Tutte#1#2{{#1}^{#2}}
\def\Unit{\mathbf{e}}

We describe a very simple algorithm to morph planar straight-line graphs that combines the benefits of both the Floater and Gotsman approach~\cite{fg-mti-99,gs-gipm-01,sg-cmcpt-01,sg-msfuo-01,sg-imct-03} and the Cairns approach~\cite{c-dprc-44,c-idgc2-44,t-dpg-83,aabcd-hmpgd-17,kklss-cimpg-19}.  Our algorithm constructs a morph consisting of $O(n)$ unidirectional morphing steps, in $O(n^{1+\omega/2})$ time.  Because our morphs do not use edge collapses, they are also potentially good for visualization.

Fix a planar graph $G$ and a convex outer face.  Let $p^\lambda_v$ denote the position of vertex $v$ in the Floater drawing $\Gamma^\lambda$ with respect to weight vector $\lambda$.  The following lemma is a planar asymmetric version of Lemma 5.1 of Chambers \etal~\cite{celp-hmgt-21}. Intuitively, it states that changing the weights of the darts of a single edge $e$ moves each vertex in the Floater drawing along lines parallel to $e$.

\begin{lemma}
\label{L:tutte-parallel-plane}
Let $\lambda$ and $\mu$ be arbitrary positive weight vectors such that $\lambda_{d} \ne \mu_{d}$ or $\lambda_{\Rev(d)} \ne \mu_{\Rev(d)}$ for some dart $d$, but $\lambda_{d'} = \mu_{d'}$ for all darts $d' \notin \Set{d,\Rev(d)}$. For each vertex $w$, the vector $p^\mu_w - p^\lambda_w$ is parallel to the drawing of $d$ in $\Tutte{\Gamma}{\lambda}$.
\end{lemma}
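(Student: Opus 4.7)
My plan is to reduce the lemma to a linear-algebraic observation: the displacement $D := P^\mu - P^\lambda$ between the two Floater drawings satisfies a Floater-type linear system whose right-hand side is supported at only the endpoints of $d$ and points in the direction of the drawing of $d$ in $\Gamma^\lambda$.

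Write $d : \arc{a}{b}$ and let $e := p^\lambda_b - p^\lambda_a$ denote the drawing of $d$ in $\Gamma^\lambda$. Because the outer face is fixed, $D_w = (0,0)$ for every boundary vertex $w$. For each interior vertex $u$, I would write equation~\eqref{eq:floater} for both $\mu$ and $\lambda$, substitute $p^\mu_v = p^\lambda_v + D_v$ into the $\mu$-equation, and use the $\lambda$-equation to eliminate the inhomogeneous term, yielding
\[
\sum_v \mu_{\arc{u}{v}}\bigl(D_v - D_u\bigr) \;=\; \sum_v \bigl(\lambda_{\arc{u}{v}} - \mu_{\arc{u}{v}}\bigr)\bigl(p^\lambda_v - p^\lambda_u\bigr).
\]
By hypothesis, $\lambda$ and $\mu$ agree on every dart outside $\{d, \Rev(d)\}$, so the right-hand side vanishes for each interior $u \notin \{a,b\}$ and reduces to a scalar multiple of $e$ when $u = a$ or $u = b$ is interior.

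Collecting these equations over all interior vertices gives the matrix identity $L^\mu_{II}\, D_I = f\, e^T$, where $L^\mu_{II}$ is the interior-to-interior block of the asymmetric Laplacian of $\mu$, $D_I$ collects the rows of $D$ indexed by interior vertices, $f \in \Real^{|I|}$ has at most two nonzero entries (at the interior endpoints of $d$), and $e^T \in \Real^{1\times 2}$. Since Floater's theorem guarantees that $L^\mu_{II}$ is nonsingular,
\[
D_I \;=\; \bigl((L^\mu_{II})^{-1} f\bigr)\, e^T,
\]
a rank-at-most-one matrix whose rows are all scalar multiples of $e^T$. Equivalently, $p^\mu_w - p^\lambda_w$ is parallel to $e$ for every interior $w$, and for every boundary $w$ the displacement vanishes and is trivially parallel to $e$.

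The main subtlety will be bookkeeping when one or both of $a, b$ lies on the outer face: such a vertex contributes no interior equation and simply drops out of $f$, so the argument goes through unchanged. The conceptual heart of the proof is the observation that perturbing a single antiparallel pair of dart weights creates a rank-one forcing aligned with the drawing of the perturbed edge, and Floater's theorem transports this rank-one structure from the forcing to the solution.
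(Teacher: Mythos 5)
Your proof is correct. It rests on the same two facts as the paper's proof --- perturbing the weights of a single antiparallel dart pair changes only the two Floater equations at the endpoints of $d$, and in each of those equations the changed coefficient multiplies the vector $p^\lambda_b - p^\lambda_a$ drawn by $d$ --- but you package them differently. The paper rotates so that $d$ is horizontal, projects system \eqref{eq:floater} onto the direction orthogonal to $d$, observes that the old orthogonal coordinates still satisfy the perturbed projected system (because the changed coefficients are multiplied by $y^\lambda_u - y^\lambda_v = 0$), and concludes $y^\mu = y^\lambda$ from uniqueness of the Floater drawing. You instead subtract the two systems and solve for the displacement directly, obtaining $D_I = \bigl((L^\mu_{II})^{-1} f\bigr)\,e^T$ with $f$ supported on the interior endpoints of $d$; nonsingularity of $L^\mu_{II}$ is the same uniqueness statement the paper invokes, so the two arguments are equivalent (dotting your identity with a vector orthogonal to $e$ recovers the paper's computation). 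Your version buys a little more: an explicit rank-one formula for the displacement, exhibiting each vertex's motion as a scalar multiple of the single vector $e$, and it avoids the rotation step. It is also essentially the formulation the paper itself adopts for the toroidal generalization (Lemma~\ref{L:tutte-parallel-torus}), where $L^\mu$ is written as a rank-one update of $L^\lambda$. Your handling of boundary endpoints is right and worth keeping explicit: an endpoint of $d$ on the outer face contributes no row to the interior system, so $f$ has fewer nonzero entries, and the displacement is identically zero when both endpoints are on the boundary.
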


\begin{proof}%[Planar version]
Suppose $d$ has tail $u$ and head $v$, and (by rotating the drawing if necessary) that $d$ is drawn parallel to the $x$-axis.  For each vertex $i$, let $y^\lambda_i$ and $y^\mu_i$ be the $y$-coordinates of points $p^\lambda_i$ and~$p^\mu_i$, respectively, so that $y^\lambda_u = y^\lambda_v$.  We need to prove that $y^\lambda_w = y^\mu_w$ for every vertex $w$.

Projecting linear system~\eqref{eq:floater} for $\lambda$ onto the $y$-axis gives us
\begin{equation}
\label{eq:floater-y}
\sum^{\phantom{.}}_{\arc{i}{j}} \lambda^{\phantom{.}}_{\arc{i}{j}}(y^\lambda_j - y^\lambda_i) = 0 \qquad \text{for each vertex $i$}.
\end{equation}
Swapping entries of $\lambda$ with corresponding entries of $\mu$ in the system~\eqref{eq:floater-y} changes at most two constraints, corresponding to the two endpoints $u$ and $v$ of $d$.  Moreover, in each changed constraint, the single changed coefficient is multiplied by $y^\lambda_u - y^\lambda_v = y^\lambda_v - y^\lambda_u = 0$, so the $y^\lambda_i$'s also solve the corresponding system for $\mu$. Since the system~\eqref{eq:floater-y} and its counterpart for $\mu$ each have a unique solution, we conclude that $y^\lambda_w = y^\mu_w$ for every vertex $w$.
\end{proof}

Under the assumptions of Lemma~\ref{L:tutte-parallel-plane}, linearly interpolating vertex positions from $\Gamma^\lambda$ to $\Gamma^\mu$ yields a unidirectional linear morph~\cite[Corollary 7.2]{aabcd-hmpgd-17},~\cite[Lemma~5.2]{celp-hmgt-21}.  It follows that we can morph between isomorphic convex drawings through a sequence of at most $3n-9$ unidirectional linear morphing steps, one for each internal edge, following the algorithm in Fig.~\ref{fig:convex-planar-algo}.  Initial and final barycentric weight vectors can be found in $O(n)$ time using, for example, Floater's mean-value method~\cite{f-mvc-03,hf-mvcap-06}.  
 Each intermediate drawing can be computed in $O(n^{\omega/2})$ time using  nested dissection~\cite{lrt-gnd-79,ay-msnda-13}, for a total running time of $O(n^{1+\omega/2})$.

\begin{figure}[htb]
\centering
\begin{algorithm}
\textul{$\textsc{MorphConvex}(\Gamma_{\text{start}}, \Gamma_{\text{end}})$}: \+
\\	$\lambda \gets$ barycentric weights for $\Gamma_{\text{start}}$
\\	$\mu \gets$ barycentric weights for $\Gamma_{\text{end}}$
\\	$k \gets 0$
\\[0.5ex]
	for each internal edge $e$ \+
\\		$k \gets k + 1$
\\		$d \gets $ a dart of $e$
\\		$\lambda_{d} \gets \mu_{d}$
\\		$\lambda_{\Rev(d)} \gets \mu_{\Rev(d)}$
\\		$\Gamma_k \gets \Gamma^\lambda$\-
\\[0.5ex]
	return $\Gamma_{\text{start}},\Gamma_1,\Gamma_2,\dots,\Gamma_k$ \Comment{$ = \Gamma_{\normalfont\text{end}}$}
\end{algorithm}
\caption{Algorithm for morphing between convex planar drawings.}
\label{fig:convex-planar-algo}
\end{figure}

Because all Floater drawings are convex, Lemma~5.2 of Chambers \etal~\cite{celp-hmgt-21} implies that \textsc{MorphConvex} actually produces a \emph{convexity-preserving} piecewise-linear morph; all faces remain convex throughout the morph.  The existence of convexity-preserving morphs was first proved by Thomassen~\cite{t-dpg-83}; Angelini, Da Lozzo, Frati, Lubiw, Patrignani, and Roselli~\cite{alflp-omcd-15} described a piecewise-linear convexity-preserving morph consisting of $O(n)$ steps.  
Our algorithm is significantly simpler than that of Angelini \etal~\cite{alflp-omcd-15}.

\begin{theorem}
\label{T:planar-convex-morph}
Given any two isomorphic convex planar drawings %$\Gamma_0$ and $\Gamma_1$ 
with $n$ vertices and the same convex outer face, we can compute a morph %from $\Gamma_0$ to $\Gamma_1$ 
between them consisting of at most $3n - 9$ unidirectional linear morphing steps, in $O(n^{1+\omega/2})$ time.
\end{theorem}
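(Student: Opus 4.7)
The plan is to run the algorithm \textsc{MorphConvex} described in Figure~\ref{fig:convex-planar-algo} and verify its three required properties: correctness (the output is a crossing-free morph), the step bound $3n-9$, and the time bound $O(n^{1+\omega/2})$. The setup is to compute, in a preprocessing phase, a barycentric weight vector $\lambda$ realizing $\Gamma_0$ and a barycentric weight vector $\mu$ realizing $\Gamma_1$, using Floater's mean-value coordinates in $O(n)$ time. Then I fix an arbitrary ordering of the internal edges $e_1,\dots,e_k$ and, for each $e_i$ in turn, replace the pair of entries $(\lambda_{d_i},\lambda_{\Rev(d_i)})$ with $(\mu_{d_i},\mu_{\Rev(d_i)})$, recompute the Floater drawing $\Gamma_i \coloneqq \Gamma^{\lambda}$ by solving system~\eqref{eq:floater}, and emit $\Gamma_i$. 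After the last swap, $\lambda = \mu$ on every dart, so the final drawing equals $\Gamma_1$.

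Next I need to argue that for each $i$, the \emph{linear} interpolation of vertex positions from $\Gamma_{i-1}$ to $\Gamma_i$ is a valid unidirectional morph. Here the weight vectors before and after the $i$-th step differ only on the two darts of $e_i$, so Lemma~\ref{L:tutte-parallel-plane} applies: every vertex $w$ satisfies $p^{\lambda_{\text{new}}}_w - p^{\lambda_{\text{old}}}_w$ parallel to the drawing of $e_i$ in $\Gamma_{i-1}$. Thus the straight-line motion between the two drawings is unidirectional. To see the motion is actually crossing-free, I invoke the cited Corollary~7.2 of Alamdari et al.\ and Lemma~5.2 of Chambers et al.: because the weight vector at every intermediate time $t\in[0,1]$ is a convex combination of two positive weight vectors (hence positive), the intermediate vertex positions are exactly the Floater drawing of the interpolated weights, and Floater's theorem guarantees each such drawing has strictly convex faces. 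Concatenating these $k$ unidirectional steps therefore produces a convexity-preserving piecewise-linear morph.

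It remains to bound $k$ and the running time. Since $\Gamma_0$ is a convex planar drawing with the same convex outer face as $\Gamma_1$, the underlying graph $G$ has at most $3n-6$ edges by Euler's formula, and at least three of them lie on the outer face; hence the number of internal edges is at most $3n-9$, yielding the claimed step bound. For the running time, each of the $k = O(n)$ intermediate drawings is obtained by solving the Laplacian-like linear system~\eqref{eq:floater} with a fixed convex boundary; using generalized nested dissection on planar graphs~\cite{lrt-gnd-79,ay-msnda-13} this takes $O(n^{\omega/2})$ time per step. Summed over all steps plus the $O(n)$ preprocessing, the total time is $O(n^{1+\omega/2})$.

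The main conceptual obstacle is confirming that the straight-line interpolation between two Floater drawings whose weights differ only on one edge coincides pointwise with the family of Floater drawings of the interpolated weights; this requires that Floater's linear system be linear in the weights of the single varying edge when the corresponding orthogonal component $y^\lambda_u - y^\lambda_v$ vanishes, which is precisely the content extracted in the proof of Lemma~\ref{L:tutte-parallel-plane}. Everything else is bookkeeping.
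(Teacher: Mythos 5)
Your overall structure---run \textsc{MorphConvex}, use Lemma~\ref{L:tutte-parallel-plane} for unidirectionality, count internal edges, and solve each system by nested dissection---matches the paper, and your step and running-time bounds are correct. But there is a genuine error in your justification of crossing-freeness. You claim that the linear interpolation of vertex positions between $\Gamma_{i-1}=\Gamma^\lambda$ and $\Gamma_i=\Gamma^\mu$ ``coincides pointwise with the family of Floater drawings of the interpolated weights,'' and you assert in your closing paragraph that this is ``precisely the content extracted in the proof of Lemma~\ref{L:tutte-parallel-plane}.'' Neither statement is true. The solution of system~\eqref{eq:floater} is a rational, not linear, function of the weights, even when only one edge's weights vary. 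Already for a triangular outer face with a single interior vertex $w$ adjacent to boundary vertices $a$, $b$, $c$, replacing $\lambda_{\arc{w}{a}}$ by $\lambda_{\arc{w}{a}}+t\delta$ moves $w$ to $p_w + \frac{t\delta}{S+t\delta}(p_a - p_w)$, where $S$ is the total weight of the darts leaving $w$: the trajectory is the correct segment, but the parametrization is not linear in $t$, and with more vertices different vertices are in general ``delayed'' by different amounts, so the two families of drawings need not even agree as sets. Lemma~\ref{L:tutte-parallel-plane} only controls the component of the displacement orthogonal to $d$ (it is zero); it says nothing about how the parallel component depends on $t$.

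Consequently, ``each intermediate position is a Floater drawing, hence convex by Floater's theorem'' is not a valid reason why the \emph{linear} morph is crossing-free. The correct argument---and the one the paper actually relies on---is to apply Corollary~7.2 of Alamdari \etal\ and Lemma~5.2 of Chambers \etal\ directly as stated: both endpoint drawings are convex with the same convex outer face, and every vertex displacement is parallel to a common direction, so during the linear morph the signed area of every triangle of the drawing is a \emph{linear} function of time (the quadratic terms cancel when all velocities are parallel); since these signed areas have the correct sign at $t=0$ and $t=1$, they keep it for all intermediate times, and the morph is planar and convexity-preserving. Replacing your justification with this application of the cited results repairs the proof; everything else in your write-up stands.
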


The previous algorithm can be extended to morph between non-convex drawings by introducing  intermediate convex drawings, as follows: Add edges to the initial and final drawings to decompose every face into convex polygons, compute barycentric weights for the resulting convex drawing, and then reduce the weights of each added edge (one-by-one) to zero, effectively deleting that edge.  Dropping the added edges yields a piecewise-linear morph from each input drawing to a convex drawing.  Again, each intermediate drawing can be computed in $O(n^{\omega/2})$ time.  Our complete morphing algorithm is shown in Fig.~\ref{fig:planar-algo}.  Our algorithm \textsc{Convexify} is considerably simpler than that of Kleist \etal~\cite{kklss-cimpg-19}; however, unlike Kleist \etal, our algorithm is \emph{not} necessarily convexity-increasing.

\begin{figure}[htb]
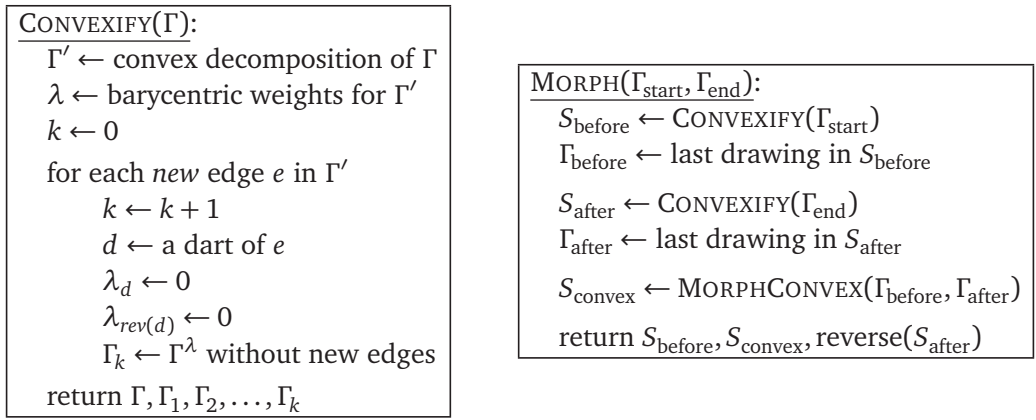

\centering
\begin{algorithm}
\textul{$\textsc{Convexify}(\Gamma)$}: \+
\\  $\Gamma' \gets$ convex decomposition of $\Gamma$
\\  $\lambda \gets$ barycentric weights for $\Gamma'$
\\  $k \gets 0$
\\[0.5ex]
  for each \emph{new} edge $e$ in $\Gamma'$ \+
\\    $k \gets k + 1$
\\    $d \gets $ a dart of $e$
\\    $\lambda_d \gets 0$
\\    $\lambda_{\Rev(d)} \gets 0$
\\    $\Gamma_k \gets \Gamma^\lambda$ without new edges\-
\\[0.5ex]
  return $\Gamma,\Gamma_1,\Gamma_2,\dots,\Gamma_k$
\end{algorithm}
\hfil
\begin{algorithm}
\textul{$\textsc{Morph}(\Gamma_{\text{start}}, \Gamma_{\text{end}})$}: \+
\\  $S_{\text{before}} \gets \textsc{Convexify}(\Gamma_{\text{start}})$
\\  $\Gamma_{\text{before}} \gets $ last drawing in $S_{\text{before}}$
\\[1ex]
    $S_{\text{after}} \gets \textsc{Convexify}(\Gamma_{\text{end}})$
\\  $\Gamma_{\text{after}} \gets $ last drawing in $S_{\text{after}}$
\\[1ex]
	$S_{\text{convex}} \gets \textsc{MorphConvex}(\Gamma_{\text{before}}, \Gamma_{\text{after}})$
\\[1ex]
	return $S_{\text{before}}, S_{\text{convex}}, \text{reverse}(S_{\text{after}})$
\end{algorithm}
\caption{Algorithm for morphing between general planar straight-line drawings.}
\label{fig:planar-algo}
\end{figure}

In total, we perform one morphing step for each internal edge of $G$, plus at most $2(k-3)$ morphing steps for each bounded face with degree $k$.
% More specifically, to convexify the initial and final faces, it suffices to add at most one additional edge for each reflex angle; on the other hand, each internal vertex can only support at most one reflex angle around it, so we need at most $n-3$ additional edges.  Since a planar graph can have at most $3n-9$ internal edges, our morph thus consists of at most $4n - 12$ linear morphing steps.
Euler's formula implies that a $3$-connected planar graph has between $1.5n$ and $3n - 6$ edges, and thus at most $3n - 9$ internal edges.  Thus, we need to add at most $1.5n - 6$ edges to convexify the initial and final faces, so our morph consists of at most $4.5n - 15$ linear morphing steps.
In summary:

\begin{theorem}
\label{T:planar-morph}
Given any two isomorphic $3$-connected planar straight-line drawings %$\Gamma_0$ and $\Gamma_1$
 with $n$ vertices and the same convex outer face, we can compute a morph %from $\Gamma_0$ to $\Gamma_1$
 between them consisting of at most $4.5n - 15$ unidirectional linear morphing steps, in $O(n^{1+\omega/2})$ time.
\end{theorem}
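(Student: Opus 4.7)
The plan is to reduce to Theorem~\ref{T:planar-convex-morph} by preprocessing each input into a convex drawing of the same graph. Given $\Gamma_0$ and $\Gamma_1$, I would invoke \textsc{Convexify} on each to produce sequences ending at convex straight-line drawings $\Gamma_{\text{before}}$ and $\Gamma_{\text{after}}$ of $G$, then apply \textsc{MorphConvex}$(\Gamma_{\text{before}}, \Gamma_{\text{after}})$, and finally concatenate with the \textsc{Convexify}$(\Gamma_1)$ sequence reversed. Since concatenating and reversing morphs yields a morph, the remaining task is to verify that \textsc{Convexify} realizes a valid unidirectional piecewise-linear morph from a given 3-connected planar straight-line drawing of $G$ (with convex outer face) to a convex drawing of $G$.

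The heart of the proof concerns \textsc{Convexify}. I would first insert chords that decompose every bounded face into strictly convex polygons; this can be done by adding at most one chord per reflex vertex of the face. The resulting augmented graph $G'$ is planar and, since it contains the 3-connected graph $G$, it is 3-connected, so Floater's theorem applies. I would then compute Floater barycentric weights for the drawing $\Gamma'$ using, e.g., mean-value coordinates, and reduce the pair of dart weights of each chord to zero, one chord at a time. As long as a chord's weight is strictly positive, Lemma~\ref{L:tutte-parallel-plane} applies, so every vertex moves along a line parallel to the chord and the step is a unidirectional linear morph of the Floater drawing of the current augmented graph, all of whose faces remain strictly convex. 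At the instant a chord's weight hits zero the terms involving that chord vanish from Floater's linear system; by continuity, the limiting vertex positions solve Floater's system for the graph with that chord removed, which is still 3-connected with the same convex outer face and therefore also yields a convex drawing. We then drop the chord and iterate. After all chords are gone, the surviving positions solve Floater's system for $G$ itself, producing a convex drawing of~$G$. At every intermediate moment the drawing of $G$ obtained by ignoring the remaining chords is crossing-free, since it sits inside a convex drawing of the current augmented graph.

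For the step count and running time I would follow the reflex-angle accounting given in the paper: since each internal vertex supports at most one reflex angle and outer-face vertices support none, each \textsc{Convexify} call introduces at most $n-3$ chords; combined with the at-most $3n-9$ internal-edge steps of \textsc{MorphConvex} from Theorem~\ref{T:planar-convex-morph}, the overall morph uses at most $4n-12$ unidirectional linear steps. Each step's drawing is obtained by solving one Floater linear system on a planar graph, which takes $O(n^{\omega/2})$ time via generalized nested dissection~\cite{lrt-gnd-79,ay-msnda-13}, giving a total running time of $O(n \cdot n^{\omega/2}) = O(n^{1+\omega/2})$. The main obstacle I anticipate is justifying what happens as chord weights cross zero: one must show that the Floater drawing depends continuously on the weight vector up to the boundary of the positive orthant, that the limiting drawing coincides with the Floater drawing of the chord-removed graph, and that dropping the chord at that moment preserves both planarity and the invariant that the next starting configuration is a valid Floater drawing of a 3-connected subgraph. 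Once this continuity argument is in place, the theorem follows immediately by combining \textsc{Convexify} with Theorem~\ref{T:planar-convex-morph}.
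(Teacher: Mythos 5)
Your overall strategy is exactly the paper's: run \textsc{Convexify} on each input drawing, morph between the two resulting convex drawings with \textsc{MorphConvex} (Theorem~\ref{T:planar-convex-morph}), and append the second \textsc{Convexify} sequence reversed. Your treatment of the weight-to-zero limit is consistent with (indeed, somewhat more careful than) the paper's: once a chord's two dart weights reach zero, Floater's system for the augmented graph literally coincides with Floater's system for the graph with that chord deleted, which is still $3$-connected with the same convex outer face; so each such step is a unidirectional linear morph between convex Floater drawings, and the restriction to $G$ stays crossing-free throughout.

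The one genuine problem is the step count. As you state it, each of the two \textsc{Convexify} calls contributes up to $n-3$ steps and \textsc{MorphConvex} contributes up to $3n-9$, which totals $5n-15$, not $4n-12$. To recover the claimed bound you must not count the \textsc{MorphConvex} steps and the first \textsc{Convexify}'s chords separately: the internal edges of $G$ together with the chords added to $\Gamma_0$ are precisely the internal edges of the augmented planar graph $G'_0$, so those two phases combined use at most $3n-9$ steps; only the chords added to $\Gamma_1$ (at most $n-3$ of them, by your reflex-angle argument) are counted in addition, giving $3n-9+n-3=4n-12$. With that correction the accounting matches the paper's and the theorem follows.
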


\paragraph{Models of computation.}
The algorithm of Alamdari \etal~\cite{aabcd-hmpgd-17} require a slightly nonstandard real RAM model of computation that supports exact square and exact cube roots.  In contrast, Floater’s mean-value weights can be expressed in terms of areas and Euclidean lengths~\cite{hf-mvcap-06}, which require only square roots to evaluate exactly.  If initial and final barycentric weights are given, both Floater and Gotsman’s morphing algorithm~\cite{fg-mti-99} and our incremental algorithm use only basic arithmetic operations: addition, subtraction, multiplication, and division.

% Demaine, Hesterberg, and Ku \cite{dhk-fcqcp-20}

Even without exact roots, any integer-RAM or floating-point implementation of our morphing algorithm must contend with precision issues.  A careful implementation of Alon and Yuster’s nested dissection algorithm~\cite{ay-msnda-13} solves Floater’s linear system \eqref{eq:floater} exactly in $O(n^{1 + \omega/2}\polylog n)$ bit operations, assuming all dart weights $\lambda_{\arc{u}{v}}$ are $O(\log n)$-bit integers \cite{ay-msnda-13,b-simig-68}.  Thus, at least then all weights are given as part of the input, an exact implementation of our morphing algorithm runs in $O(n^{2+\omega/2}\polylog n)$ on a standard integer RAM.  Coordinates of Tutte/Floater drawings can require $\Omega(n)$ bits of precision to avoid collapsing or crossing edges~\cite{eg-dspgt-95, df-ftfgrp-21}; a canonical bad example is shown in Figure~\ref{F:twisted}.  Thus, the near-linear cost of exact arithmetic is unavoidable in the worst case.
%
%Recent algorithms can compute an $\e$-approximate solution to system \eqref{eq:floater} in $O(n\polylog n\log(1/\e))$ time \cite{ckkpp-sdlsn-18}; unfortunately, we need $\e = O(1/n)$ to guarantee that the approximate solution describes a crossing-free drawing.

Shen, Jiang, Zorin, and Panozzo~\cite{sjzp-pe-19} observe that floating-point implementations of Tutte’s algorithm suffer from robustness issues in practice.  Shen \etal~describe an iterative procedure to repair floating-point-Tutte drawings; however, it is unclear whether a similar procedure can be used to avoid precision issues in our algorithm while maintaining continuity of the resulting morph.  It is also unclear whether precision issues in our algorithm can be avoided, or at least minimized, by carefully choosing the order in which edge weights are changed and/or by morphing through a carefully chosen intermediate drawing.

\begin{figure}[htb]
\centering
\begin{tabular}{c@{\qquad}c@{\qquad}c}
\includegraphics[scale=0.3]{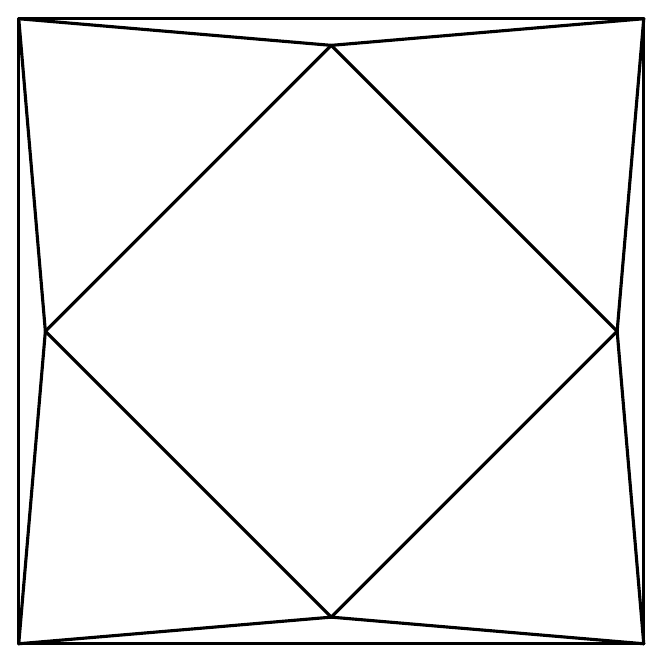} & 
\includegraphics[scale=0.3]{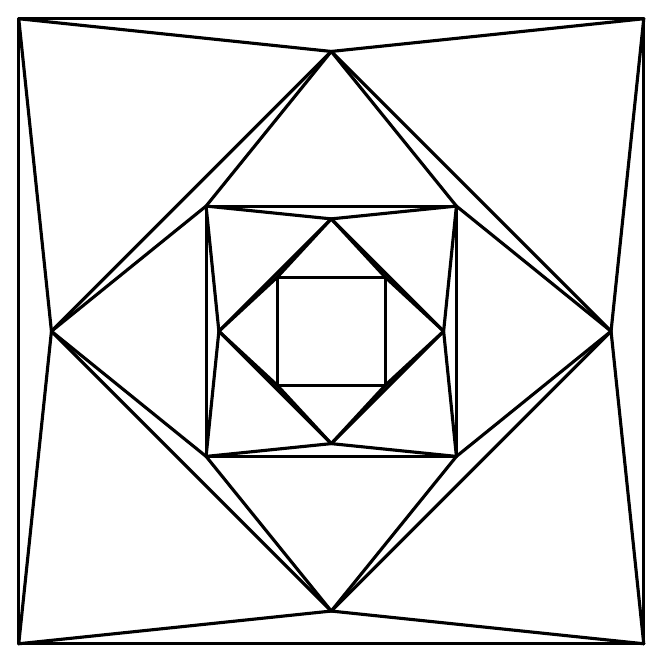} & 
\includegraphics[scale=0.3]{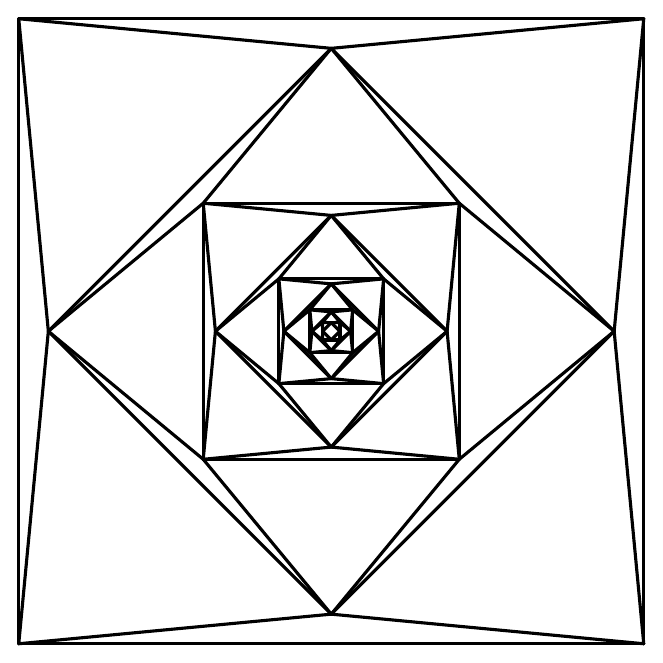}
\end{tabular}
\caption{A family of weighted spring embeddings consisting of nested squares, with 2, 5, and 10 layers, respectively.}
\label{F:twisted}
\end{figure}

\section{Morphable Weight Vectors on the Flat Torus}

As observed by Steiner and Fischer~\cite{sf-ppc2m-04}, Floater and Gotsman’s morphing algorithm does not directly generalize to the toroidal setting, since not all positive  weight vectors $\lambda$ are realizable.  In particular, given arbitrary barycentric weights $\lambda(0)$ and $\lambda(1)$ of two isotopic convex torus drawings, intermediate weights $\lambda(t) := (1-t)\lambda(0) + t\lambda(1)$ are not necessarily realizable; see Appendix~\ref{A:no-torus-fg} for an example.  Thus, interpolating barycentric weights does not necessarily give us a morph.

To bypass this issue, we identify a subspace of \emph{morphable} weight vectors, such that every convex torus drawing has a morphable barycentric weight vector, every morphable weight vector is realizable, and convex combinations of morphable weights are morphable.  Specifically, a positive weight vector $\lambda$ is \EMPH{morphable} if each \emph{column} of the matrices $L^\lambda$ and $H^\lambda$ sums to $0$.%
\footnote{Directed graph Laplacians whose columns sum to zero are also called \emph{Eulerian}; Cohen \etal~\cite{ckpps-facsd-16} refer to the scaling process described by Lemma \ref{L:can-morphabilify} as \emph{Eulerian scaling}.}
%, i.e.,
% \[
% %	\sum_{i} \lambda_{\arc{i}{j}} = \sum_{i} \lambda_{\arc{j}{i}} \text{~for all $j$}
% 	\sum^{\phantom{.}}_{d : \Tail(d) = j} \lambda_d = \sum_{d : \Head(d) = j} \lambda_d
% 	\text{~~for all $j$}
% 	\qquad
% 	\text{and}
% 	\qquad
% 	% \sum_i \sum_{d : \Head(d) = i} \lambda_d x_d =
% 	\sum_d \lambda_d \tau_d = (0,0).
% \]
The following lemma is immediate:

\begin{lemma}
\label{L:morphable-convex}
Convex combinations of morphable weight vectors are morphable.
\end{lemma}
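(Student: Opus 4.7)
The plan is to observe that the defining condition of morphability is preserved under all convex combinations simply because it is preserved under all linear combinations of weight vectors, and positivity is preserved under convex combinations.

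First, I would note that the entries of $L^\lambda$ and $H^\lambda$, as given by the definitions in \eqref{eq:ggt-laplacian}, are each a linear function of the weight vector $\lambda$. Consequently the maps $\lambda \mapsto L^\lambda$ and $\lambda \mapsto H^\lambda$ are linear, and therefore so are the column sums $\lambda \mapsto \sum_i L^\lambda_{ij}$ and $\lambda \mapsto \sum_i H^\lambda_i$ for each index $j$.

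Next, given morphable weight vectors $\lambda(0)$ and $\lambda(1)$ and a scalar $t \in [0,1]$, let $\lambda(t) := (1-t)\lambda(0) + t\lambda(1)$. Since each entry of $\lambda(t)$ is a convex combination of the corresponding positive entries of $\lambda(0)$ and $\lambda(1)$, the vector $\lambda(t)$ is itself positive, so it is a valid weight vector. Applying the linearity observation above, every column sum of $L^{\lambda(t)}$ equals $(1-t)$ times the corresponding column sum of $L^{\lambda(0)}$ plus $t$ times the corresponding column sum of $L^{\lambda(1)}$, both of which vanish by the morphability of $\lambda(0)$ and $\lambda(1)$; the same computation applies to $H^{\lambda(t)}$. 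Hence $\lambda(t)$ satisfies the definition of morphability.

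There is no real obstacle here: the entire content of the lemma is that ``column sums to zero'' is a linear condition, and positivity is preserved under convex combinations. The argument extends verbatim to convex combinations of any finite number of morphable weight vectors by iterating the two-vector case (or by the same direct linearity argument applied to a convex combination with arbitrarily many terms).
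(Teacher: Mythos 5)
Your proposal is correct and matches the paper's reasoning exactly: the paper states that the lemma ``follows immediately from the definition,'' and the content of that immediacy is precisely what you spell out—the entries of $L^\lambda$ and $H^\lambda$ depend linearly on $\lambda$, so the zero-column-sum condition is preserved under convex combinations, and positivity is preserved as well. No gap; your write-up is just a more explicit version of the paper's one-line justification.
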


\begin{lemma}
\label{L:morphable-is-realizable}
Every morphable weight vector is realizable.
\end{lemma}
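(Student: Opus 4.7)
The plan is to show that if $\lambda$ is morphable then the linear system $L^\lambda P = H^\lambda$ is consistent, and then invoke the Gortler--Gotsman--Thurston theorem already quoted in Section~\ref{S:defs} to conclude that any solution is the position matrix of a convex Floater drawing realizing $\lambda$.

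First I would note that by the very definition \eqref{eq:ggt-laplacian}, every row of $L^\lambda$ sums to zero: the diagonal entry $L^\lambda_{ii}$ is the total weight of darts leaving $i$, which is exactly canceled by the sum of the off-diagonal entries $-\sum_{d:\arc{i}{j}} \lambda_d$ over $j \ne i$. Equivalently, $L^\lambda \mathbf{1} = 0$, so $\mathbf{1}$ lies in the right null space. Since the paper has already recorded that $\operatorname{rank}(L^\lambda) = n-1$, both the left and right null spaces are one-dimensional.

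Next I would use the morphability hypothesis on the columns of $L^\lambda$: by definition, $\mathbf{1}^T L^\lambda = 0$, so $\mathbf{1}$ also lies in the left null space. Combined with the previous step, the left null space is exactly $\operatorname{span}(\mathbf{1})$. Hence the column space of $L^\lambda$ is the hyperplane $\{v \in \Real^n : \mathbf{1}^T v = 0\}$, and the system $L^\lambda P = H^\lambda$ (read one column of $H^\lambda$ at a time) is solvable if and only if each column of $H^\lambda$ sums to zero. But this is precisely the second half of the morphability hypothesis. So a solution $P$ exists.

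Finally, I would apply the statement already cited from Gortler, Gotsman, and Thurston~\cite{ggt-domam-06}: for the fixed graph $G$ and fixed translation vectors $\tau_d$ coming from an essentially 3-connected drawing, any solution to \eqref{eq:ggt} with positive weights is the position matrix of a convex drawing $\Gamma^\lambda$. This $\Gamma^\lambda$ realizes $\lambda$, so $\lambda$ is realizable. The only nontrivial ingredient is the rank-$(n-1)$ fact, which the paper has already asserted and attributed to Steiner--Fischer~\cite{sf-ppc2m-04}; everything else is a one-line linear-algebra consistency check, so I do not expect any real obstacle in executing this plan.
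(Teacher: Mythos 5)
Your proposal is correct and rests on essentially the same argument as the paper's proof: both reduce realizability to consistency of $L^\lambda P = H^\lambda$, using the rank-$(n-1)$ fact together with the zero column sums of $L^\lambda$ and $H^\lambda$. The paper phrases the consistency check as ``the $n$th row is implied by the other $n-1$ rows, so drop it,'' while you phrase it as ``the column space is the hyperplane orthogonal to the all-ones vector and the columns of $H^\lambda$ lie in it'' --- these are the same observation in different clothing, and your explicit appeal to Gortler--Gotsman--Thurston to pass from a solution to a convex drawing is left implicit in the paper's definition of realizability.
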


\begin{proof}
If $\lambda$ is a morphable weight vector, then the $n$th row of the linear system $L^\lambda P = H^\lambda$ is implied by the other $n-1$ rows, so we can remove it. The resulting abbreviated linear system still has rank $n-1$, so it has a (unique) solution. %In particular, the system has a two-dimensional set of solutions that differ by translation; to nail down a specific solution it suffices to fix $p_n = (0,0)$.
\end{proof}

\begin{lemma}
\label{L:can-morphabilify}
Given a barycentric weight vector $\lambda$ for a convex torus drawing~$\Gamma$, a \emph{morphable} barycentric weight vector for $\Gamma$ can be computed in $O(n^{\omega/2})$ time.
\end{lemma}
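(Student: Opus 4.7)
The plan is to produce a morphable weight vector by scaling $\lambda$ row-wise: for each vertex $i$, choose a positive scalar $\alpha_i$, and set $\mu_d := \alpha_i \lambda_d$ for every dart $d$ with tail $i$. Under this scaling, $L^\mu = \mathrm{diag}(\alpha)\, L^\lambda$ and $H^\mu = \mathrm{diag}(\alpha)\, H^\lambda$, so whenever all $\alpha_i > 0$ the system $L^\mu P = H^\mu$ has exactly the same solution set as $L^\lambda P = H^\lambda$, and in particular $\mu$ is still barycentric for $\Gamma$. The column sums of $L^\mu$ are $\mathbf{1}^T L^\mu = \alpha^T L^\lambda$, so $\mu$ is morphable exactly when $\alpha$ lies in the left null space of $L^\lambda$; the companion column-sum condition on $H^\mu$ is automatic, since $\Gamma$'s coordinate representation satisfies $L^\lambda P = H^\lambda$ and therefore $\alpha^T H^\lambda = \alpha^T L^\lambda P = 0$.

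The crux of the argument is showing that the one-dimensional left null space of $L^\lambda$ meets the positive orthant. Here I invoke the weighted directed matrix-tree theorem~\cite{t-detie-48,b-ueied-60,l-epmtt-20}: the diagonal cofactor $\det(L^\lambda_{[i,i]})$ equals the sum, over all spanning in-arborescences of the dart graph rooted at $i$, of the product of the $\lambda_d$'s along the arborescence. Every convex torus drawing is essentially $3$-connected, so the underlying dart graph is strongly connected, and together with $\lambda > 0$ this makes every such cofactor strictly positive. Standard cofactor algebra, combined with the fact that $L^\lambda$ annihilates $\mathbf{1}$ on the right (loop darts contribute $p_v - p_u = 0$ and so do not affect $L^\lambda$), shows that the vector of diagonal cofactors spans the left null space; this is the required positive $\alpha$.

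This is the only delicate step; the algorithm is then routine. Given $\lambda$, I compute $\alpha$ by solving the rank-$(n-1)$ linear system $(L^\lambda)^T \alpha = \mathbf{0}$ with one entry pinned, say $\alpha_1 = 1$. Because $\Gamma$ is drawn on the torus, its underlying graph admits $O(\sqrt{n})$-size balanced separators, so generalized nested dissection~\cite{lrt-gnd-79,ay-msnda-13} solves the system in $O(n^{\omega/2})$ time. Scaling $\lambda$ by $\alpha$ to form $\mu$ takes an additional $O(n)$ time, well within the claimed bound. The main obstacle is the positivity claim in the previous paragraph; once that is settled, everything else is off-the-shelf linear algebra.
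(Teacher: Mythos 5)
Your proposal is correct and follows essentially the same route as the paper's proof: row-scale $\lambda$ by a positive left-null vector $\alpha$ of $L^\lambda$, establish positivity of $\alpha$ via the weighted directed matrix-tree theorem (diagonal cofactors $=$ sums of weights of in-arborescences), note that the $H$-column condition follows automatically from $L^\lambda P = H^\lambda$, and compute $\alpha$ by nested dissection with $O(\sqrt{n})$ separators in $O(n^{\omega/2})$ time. The only cosmetic difference is that you spell out the cofactor/adjugate argument identifying the left null space, which the paper delegates to the cited references.
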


\begin{proof}
The matrix $L^{\lambda}$ has rank $n-1$, so there is a one-dimensional space of (row) vectors $\alpha = (\alpha_1,\dots,\alpha_n)$ such that $\alpha L^\lambda = (0,\dots,0)$.  We can compute a non-zero vector $\alpha$ in $O(n^{\omega/2})$ time using nested dissection~\cite{lrt-gnd-79,ay-msnda-13,ad-lapeg-96}.

A directed version of the matrix tree theorem \cite{t-detie-48, b-ueied-60, l-epmtt-20} implies that we can choose all $\alpha_i$ to be positive.  Specifically, let $G^\pm$ be the weighted directed graph whose weighted arcs correspond to the weighted darts of $G$.  An \emph{inward directed spanning tree} is an acyclic spanning subgraph of~$G^\pm$ where every vertex except one (called the \emph{root}) has out-degree $1$.  The weight of an inward directed spanning tree is the product of the weights of its arcs.  For each~$i$, let~$\alpha_i$ be the sum of the weights of all inward directed spanning trees rooted at vertex~$i$; we have $\alpha_i>0$ because all dart weights are positive.  The directed matrix tree theorem implies that $\alpha L = 0$, as required; for an elementary proof, see De Leenheer~\cite[Theorem~3]{l-epmtt-20}.  (See also Cohen \etal~\cite[Lemma~1]{ckpps-facsd-16} for an alternate proof using the Perron-Frobenius theorem.)

Define a new weight vector $\mu$ by setting $\mu_d := \alpha_{\Tail(d)}\lambda_d$ for each dart $d$.  For each index~$i$, we immediately have $L^{\mu}_i P = \alpha^{\phantom{.}}_i L^\lambda_i P = \alpha^{\phantom{.}}_i H^\lambda_i = H^{\mu}_i$, where $P$ is the position matrix for~$\Gamma$, so $\mu$ is in fact a barycentric weight vector for~$\Gamma$.  
Finally, we observe that
\(
	(1,\dots,1) L^\mu = \alpha L^\lambda = (0,\dots,0)
\)
and
\(
	(1,\dots,1) H^\mu = \alpha H^\lambda = \alpha L^\lambda P = (0,\dots,0)P = (0,0),
\)
which imply that $\mu$ is morphable.
\end{proof}

%Putting the preceding lemmas together gives us the following:

\begin{theorem}
\label{T:torus-morph}
Given coordinate representations of two isotopic essentially 3-connected geodesic torus drawings $\Gamma_0$ and $\Gamma_1$, we can efficiently compute a morph from $\Gamma_0$ to $\Gamma_1$.  Specifically, after $O(n^{\omega/2})$ preprocessing time, we can compute any intermediate drawing during the morph in $O(n^{\omega/2})$ time.
\end{theorem}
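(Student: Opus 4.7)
The plan is to assemble Lemmas~\ref{L:morphable-convex}, \ref{L:morphable-is-realizable}, and \ref{L:can-morphabilify} into a direct Floater--Gotsman-style morph. First I would normalize the coordinate representations of $\Gamma_0$ and $\Gamma_1$ in $O(n)$ time so that they share a common translation vector $\tau$. Then, during preprocessing, I would compute arbitrary barycentric weight vectors $\lambda(0)$ and $\lambda(1)$ for $\Gamma_0$ and $\Gamma_1$ using Floater's mean-value coordinates in $O(n)$ time, and apply Lemma~\ref{L:can-morphabilify} to each to obtain \emph{morphable} barycentric weight vectors $\mu(0)$ and $\mu(1)$. This preprocessing takes $O(n^{\omega/2})$ time overall, dominated by the nested-dissection step inside Lemma~\ref{L:can-morphabilify}.

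For each $t \in [0,1]$, define the interpolated weight vector $\mu(t) := (1-t)\mu(0) + t\mu(1)$. Lemma~\ref{L:morphable-convex} guarantees that $\mu(t)$ is morphable, and Lemma~\ref{L:morphable-is-realizable} then guarantees that $\mu(t)$ is realizable. To pin down a single drawing from the two-dimensional translation family of realizations, I would fix the position of some vertex $v_0$ to linearly interpolate between its positions in $\Gamma_0$ and $\Gamma_1$, and let $\Gamma_t$ be the unique convex torus drawing realizing $\mu(t)$ with translation vectors $\tau$ and with $v_0$ at this prescribed position. Solving the resulting linear system at any desired $t$ takes $O(n^{\omega/2})$ time via generalized nested dissection with toroidal separators, matching the claimed query bound.

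It then remains to verify that $(\Gamma_t)_{t \in [0,1]}$ is actually a morph. Every $\Gamma_t$ is crossing-free because it is a convex torus drawing; each $\Gamma_t$ is isotopic to $\Gamma_0$ and $\Gamma_1$ because the translation vectors remain $\tau$ throughout; and continuity in $t$ follows from continuous dependence of the linear-system solution on its entries. The endpoints agree with $\Gamma_0$ and $\Gamma_1$ because, as observed in the proof of Lemma~\ref{L:can-morphabilify}, the row scaling by $\alpha$ does not alter the solution set of $L^\lambda P = H^\lambda$, so $\Gamma_0$ itself realizes $\mu(0)$ and $\Gamma_1$ itself realizes $\mu(1)$; the pin at $v_0$ then selects precisely these realizations.

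The main obstacle I anticipate is this endpoint-matching and translation-normalization step: one must arrange the vertex-pinning normalization so that the morph literally begins at $\Gamma_0$ and ends at $\Gamma_1$, not at translates of them, and so that the intermediate drawings form a continuous family in the correct isotopy class. Continuously interpolating a single vertex's position across $t$ is the cleanest way to resolve this, and it relies crucially on the fact that morphabilification via row scaling does not perturb the drawing itself---it only changes the algebraic representation of its barycentric weights.
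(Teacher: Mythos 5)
Your argument for the case where both input drawings are \emph{convex} is correct and matches the paper's proof of this case essentially step for step: normalize the translation vectors, compute mean-value weights, apply Lemma~\ref{L:can-morphabilify} to make them morphable, interpolate, and invoke Lemmas~\ref{L:morphable-convex} and~\ref{L:morphable-is-realizable} for solvability. Your explicit treatment of the translation ambiguity---pinning one vertex and interpolating its position, and noting that the row scaling in Lemma~\ref{L:can-morphabilify} does not change the realized drawing, so the endpoints are literally $\Gamma_0$ and $\Gamma_1$---is a bit more careful than what the paper writes down, and is a welcome addition.

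However, there is a genuine gap: the theorem is stated for arbitrary essentially 3-connected geodesic torus drawings, which need not be convex, and your proof silently assumes convexity. Positive barycentric weights (hence Floater's mean-value coordinates with all weights positive) exist only when every vertex lies in the interior of the convex hull of its neighbors; at a reflex corner of a non-convex face this can fail, so the very first step of your preprocessing is not available for general inputs. The paper closes this gap with a separate reduction: let $\Gamma_*$ be the Floater drawing with all dart weights equal to $1$ (which is convex), triangulate $\Gamma_0$ to obtain $T_0$, triangulate $\Gamma_*$ with the same diagonals to obtain an isotopic $T_*$, assign weight $0$ to the darts of the added diagonals in $T_*$ (the resulting weight vector is symmetric, hence morphable), derive morphable weights for $T_0$ via Lemma~\ref{L:can-morphabilify}, morph $T_0$ to $T_*$ by weight interpolation, and then discard the diagonals; a symmetric procedure morphs $\Gamma_*$ to $\Gamma_1$. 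You would need to add something of this form (or restrict the theorem statement to convex drawings) for the proof to cover the claimed generality.
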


\begin{proof}
Suppose $\Gamma_0$ and $\Gamma_1$ are convex drawings.  First, if necessary, we normalize the given coordinate representations so that their translation vectors agree, in $O(n)$ time~\cite[Theorem~A.1]{celp-hmgt-21}.  Then we find barycentric weight vectors $\lambda(0)$ and $\lambda(1)$ for $\Gamma_0$ and $\Gamma_1$, respectively, in $O(n)$ time, for example using Floater's mean-value coordinates~\cite{f-mvc-03,hf-mvcap-06}.  Following Lemma~\ref{L:can-morphabilify}, we derive morphable weights $\mu(0)$ and $\mu(1)$ from $\lambda(0)$ and $\lambda(1)$, respectively, in $O(n^{\omega/2})$ time.  Finally, given any real number $0 < t < 1$, we set $\mu(t) := (1-t)\mu(0) + t\mu(1)$ and solve the linear system $L^{\mu(t)} P^{(t)} = H^{\mu(t)}$ for the position matrix $P^{(t)}$ of an intermediate drawing $\Gamma^{\mu(t)}$; Lemmas~\ref{L:morphable-convex} and \ref{L:morphable-is-realizable} imply that this system is solvable.  The function $t \mapsto \Gamma^{\mu(t)}$ is a convexity-preserving morph between $\Gamma_0$ and $\Gamma_1$.

If the faces of $\Gamma_0$ or $\Gamma_1$ are not convex, we morph through an intermediate convex drawing, similarly to Chambers \etal~\cite[Theorem 8.1]{celp-hmgt-21}.  Let $\Gamma_*$ be the Floater drawing of~$G$ obtained by setting every dart weight to $1$.  Compute any triangulation~$T_0$ of $\Gamma_0$, and then triangulate the convex faces $\Gamma_*$ using the same diagonals, to obtain a triangulation~$T_*$ isotopic to $T_0$.  Assign weight~$0$ to the darts of the diagonals in $T_*\setminus \Gamma_*$ to obtain a barycentric weight vector $\mu_*$ for~$T_*$, which is symmetric and therefore morphable.  Derive morphable weights $\mu_0$ for~$T_0$ using mean-value coordinates \cite{f-mvc-03,hf-mvcap-06} and Lemma~\ref{L:can-morphabilify}.  Then we can morph from $T_0$ to $T_*$ by weight interpolation, using the weight vector $\mu(t) := (1-2t)\mu_0 + 2t\mu_*$ for any $0\le t\le 1/2$.  Ignoring the diagonal edges gives us a morph from $\Gamma_0$ to~$\Gamma_*$.  A symmetric procedure yields a morph from~$\Gamma_*$ to~$\Gamma_1$.
\end{proof}
%
%\begin{corollary}
%\label{T:torus-morph}
%Two essentially 3-connected geodesic embeddings $\Gamma_0$ and $\Gamma_1$ on the flat torus are isotopic if and only if they are isotopic through geodesic embeddings.  
%\end{corollary}
%
%Theorem~\ref{T:torus-morph} provides a simpler proof of Corollary 8.1 of Chambers \etal~\cite{celp-hmgt-21}.

\subsection{Deformation Space of Geodesic Torus Drawings}

Our formulation of morphable weights provides a straightforward solution to a conjecture of Connelly~\etal \cite{chhs-prlhe-83} about the deformation space of geodesic torus triangulations.  Bloch, Connelly, and Henderson \cite{bch-sslhc-84} proved that for any planar straight-line triangulation $\Gamma$ of a convex polygon $P$, the space of all planar straight-line triangulations of $P$ that are homeomorphic to~$\Gamma$ is contractible.  (Cairns’ morphing theorem \cite{c-dprc-44, c-idgc2-44} asserts only that this space is connected.)  Simpler proofs of this theorem were recently given by Cerf \cite{c-abcht-19} and Luo \cite{l-sgts-20}; in particular, Luo observed that the Bloch--Connelly--Henderson theorem follows immediately from Floater’s barycentric embedding theorem.

Connelly~\etal~\cite{chhs-prlhe-83} conjectured that every isotopy class of geodesic triangulations on any surface $S$ with constant curvature is homotopy-equivalent to the group $\emph{Isom}_0(S)$ of isometries of~$S$ that are homotopic to the identity.  In particular, $\emph{Isom}_0(S^2)$ is the rotation group $SO(3)$ and $\emph{Isom}_0(\Torus)$ is the translation group $S^1\times S^1$; for every other orientable surface $S$ without boundary, $\emph{Isom}_0(S)$ is trivial \cite{n-ihi-89}.  Very recently, Luo, Wu, and Zhu proved this conjecture for all surfaces of genus at least $2$ \cite{lwz-dsgtg-21} and for the flat torus \cite{lwz-dsgtf-21}; both proofs use nontrivial extensions of Floater’s theorem.  

Here we offer a simpler proof for torus graphs; in fact, we prove a more general result about convex drawings instead of just triangulations.

\begin{theorem}
For any convex drawing $\Gamma$ on the flat torus $\Torus$, the space of all convex drawings isotopic to $\Gamma$ is homotopy equivalent to $\Torus$.
\end{theorem}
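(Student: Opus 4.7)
The plan is to exhibit a weak deformation retraction from the space $X$ of convex drawings isotopic to $\Gamma$ onto a single $\Torus$-orbit, which is itself a copy of $\Torus$. Fix a vertex $v_0$ of the underlying graph $G$ and a coordinate representation for the isotopy class of $\Gamma$, so that all drawings in $X$ share the same translation vectors $\tau$. Let $\Gamma_*$ denote the Floater drawing of $G$ realizing the all-ones symmetric weight vector $\mathbf{1}$, normalized so that $v_0$ sits at the origin; the vector $\mathbf{1}$ is morphable because every symmetric weight vector automatically satisfies the column-sum condition. The translation action of $\Torus$ on $X$ is free (any nontrivial translation of $\Torus$ moves every vertex), so the orbit $\Torus \cdot \Gamma_* \subset X$ is a topologically embedded copy of $\Torus$.

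The next step is to construct a continuous map $\Gamma' \mapsto \mu(\Gamma')$ from $X$ to the cone of morphable barycentric weight vectors. First apply Floater's mean-value coordinate formula, which depends continuously on the vertex positions. Then apply the Eulerian rescaling of Lemma~\ref{L:can-morphabilify}: the positive left null vector of $L^\lambda$ can be written explicitly via the directed matrix-tree theorem as a polynomial in the dart weights, so the rescaling depends continuously on $\lambda$ and hence on $\Gamma'$. Define the homotopy $H \colon X \times [0,1] \to X$ by letting $H_t(\Gamma')$ be the Floater drawing realizing the weight vector $(1-t)\,\mu(\Gamma') + t\,\mathbf{1}$, normalized so that $v_0$ is at $p_{v_0}(\Gamma')$. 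The interpolated weights are morphable by Lemma~\ref{L:morphable-convex}, hence realizable by Lemma~\ref{L:morphable-is-realizable}, and the resulting Floater drawing is convex by the Gortler--Gotsman--Thurston theorem~\cite{ggt-domam-06}; the one-parameter family $s \mapsto H_s(\Gamma')$ restricted to $[0,t]$ is then a geodesic isotopy through convex drawings, so $H_t(\Gamma') \in X$ throughout.

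Finally, verify the retraction properties: $H_0(\Gamma') = \Gamma'$ because $\mu(\Gamma')$ is barycentric for $\Gamma'$ and the Floater drawing is unique once $v_0$ is pinned; $H_1(\Gamma')$ is the translate of $\Gamma_*$ sending $v_0$ to $p_{v_0}(\Gamma')$, so $H_1(X) \subseteq \Torus \cdot \Gamma_*$; and for $\Gamma' \in \Torus \cdot \Gamma_*$ we have $H_1(\Gamma') = \Gamma'$ by the same pinning argument, so $H_1$ genuinely retracts $X$ onto $\Torus \cdot \Gamma_*$. This yields the homotopy equivalence $X \simeq \Torus$. The main obstacle will be checking continuity of $H$ as a function of $(t,\Gamma')$: because $L^{\mu(t)}$ has constant corank $1$ along the homotopy, the affine space of solutions to $L^{\mu(t)}P = H^{\mu(t)}$ varies continuously, and pinning $v_0$ to a continuously varying point selects a continuous representative; a careful verification of this continuity, together with the fact that the translation vectors $\tau$ remain compatible throughout the homotopy, is where the argument demands the most care.
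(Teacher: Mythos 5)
Your proof is correct, and it relies on exactly the same ingredients as the paper's --- convexity of the set of morphable weights (Lemma~\ref{L:morphable-convex}), realizability of morphable weights (Lemma~\ref{L:morphable-is-realizable}), the Eulerian rescaling of Lemma~\ref{L:can-morphabilify} made continuous via the matrix-tree formula, and the continuity of mean-value coordinates --- but the topological packaging is genuinely different. The paper first splits $X = X_0 \times S^1 \times S^1$ by pinning a vertex at the origin, then proves $X_0$ is contractible indirectly: it shows the space $R$ of realizable weight vectors is contractible (because $R \cong \overline{M} \times \Real_+^n$ with $\overline{M}$ convex, using the \emph{uniqueness} of the normalized morphable representative), and then invokes the abstract fact that if $\Phi \circ \Psi = \mathrm{id}_{X_0}$ with $R$ contractible, then $\Psi \circ \Phi \simeq \mathrm{id}_R$ and hence $X_0 \simeq R$. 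You instead build an explicit deformation retraction of all of $X$ onto the $\Torus$-orbit of the all-ones Tutte drawing, pushing the straight-line homotopy in morphable-weight space through the Floater map while letting the pinned vertex ride along at $p_{v_0}(\Gamma')$. Your version is more constructive (it literally exhibits the contracting morph, and sidesteps both the product decomposition and the uniqueness claim behind $R \cong \overline{M}\times\Real_+^n$), at the cost of having to verify the retraction identities, the freeness and embeddedness of the translation orbit (which follows since $\Torus$ is compact and $X$ Hausdorff), and the continuity of the pinned solution of $L^{\mu(t)}P = H^{\mu(t)}$ --- which does hold, as you note, because the corank stays $1$ and the right kernel is constantly the translation directions. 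The paper's route buys the contractibility of $R$ as a reusable byproduct; yours buys an explicit retraction. Both are sound.
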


\begin{proof}
Fix a convex drawing $\Gamma$ of a graph $G$ with $n$ vertices and $m$ edges; without loss of generality, assume some vertex $v$ is positioned at $(0,0)$.  Let $X = X(\Gamma)$ denote the space of all convex drawings of $G$ isotopic to~$\Gamma$, and let $X_0 = X_0(\Gamma)$ be the subspace of drawings in $X$ where vertex $v$ is positioned at $(0,0)$.  Every drawing in $X$ is a translation of a unique drawing in $X_0$, so $X = X_0 \times S^1 \times S^1$.  Thus, to prove the theorem, it suffices to prove that $X_0$ is contractible.

Call a weight vector $\lambda\in\Real_+^{2m}$ for $\Gamma$ \emph{normalized} if $\sum_d \lambda_d = 1$, where the sum is over all darts of~$\Gamma$.  Let $R = R(\Gamma)$ denote the set of of all realizable weight vectors for $T$, and let $\overline{M} = \overline{M}(\Gamma)$ denote the set of all normalized morphable weight vectors for $T$.

Lemma~\ref{L:morphable-convex} implies that $\overline{M}$ is convex and therefore contractible.  (Specifically, $\overline{M}$ is the interior of a $(2m-n-2)$-dimensional convex polytope in $\Real^{2m}$.)

Call two realizable weight vectors $\lambda,\lambda’\in R$ \emph{equivalent} if there is a scaling vector $\alpha\in\Real_+^n$ such that $\lambda’_d = \alpha_{\Tail(d)}\lambda_d$ for every dart $d$.  Because the Laplacian matrix $L^\lambda$ has rank $n-1$, Lemma~\ref{L:can-morphabilify} implies that every realizable weight $\lambda\in R$ is equivalent to a \emph{unique} normalized morphable weight $\mu\in\overline{M}$.  It follows that $R$ is homeomorphic to $\overline{M} \times \Real_+^n$ and therefore contractible.

Now we follow the proof of Theorem 1.4 in Luo \etal~\cite{lwz-dsgtg-21}.  Because every morphable weight is realizable, solving linear system \eqref{eq:ggt} gives us a continuous map $\Phi\colon R\to X_0$.  Floater’s mean-value weights \cite{f-mvc-03,hf-mvcap-06} give us a continuous map $\Psi\colon X_0 \to R$ such that $\Phi\circ\Psi$ is the identity map on $X_0$.  Because $R$ is contractible, the function $\Psi\circ\Phi$ is homotopic to the identity map on $R$.  We conclude that $X_0$ is homotopy equivalent to $R$ and therefore contractible.
\end{proof}
\section{Open Questions}

It is natural to ask whether our “best-of-both-worlds” planar morph can be extended to graphs on the flat torus.  In Appendix~\ref{A:no-torus-edgebyedge}, we prove a toroidal analog of Lemma~\ref{L:tutte-parallel-plane} for \emph{realizable} weight vectors; unfortunately, the main roadblock is that not all weight vectors are realizable.  In particular, given a realizable weight vector (morphable or not), it is not clear when changing the weights for a single edge results in another realizable weight vector.  % We describe some partial results in Appendix~\ref{A:no-torus-edgebyedge}.

Several previous planar morphing algorithms~\cite{aabcd-hmpgd-17, kklss-cimpg-19, alflp-omcd-15, ddfpr-upm-20} rely on a certain convexifying procedure \cite{hn-cdhpg-10,k-cdhgl-21,kklss-cimpg-19,cgt-cdg23-96}, and are (potentially) faster than our algorithm via the implementation recently described by Klemz~\cite{k-cdhgl-21}.
% Hong and Nagamochi \cite{hn-cdhpg-10} described an algorithm to convert a planar drawing with $y$-monotone faces into a convex pl{}anar drawing, without changing the $y$-coordinates of the vertices; their algorithm is a key component of several planar morphing algorithms~\cite{aabcd-hmpgd-17, kklss-cimpg-19, alflp-omcd-15, ddfpr-upm-20}.  Kleist \etal~\cite{kklss-cimpg-19} reproved Hong and Nagamochi’s result by finding dart weights for the input drawing, such that the $y$-coordinates of its vertices already satisfy Floater’s linear system~\eqref{eq:floater}.  Essentially the same technique (but with symmetric weights!\@) was previously used by Chrobak, Goodrich, and Tamassia~\cite{cgt-cdg23-96} to compute low-volume convex polyhedral drawings.
It is an open question whether the procedure can be extended to geodesic torus graphs.%; once again, the fact that not all weight vectors are realizable is a significant obstacle.

One can also ask if the result can be extended to surfaces of higher genus. The recent results of Luo \etal~\cite{lwz-dsgtg-21} imply that Floater and Gotsman's planar morphing algorithm~\cite{fg-mti-99} extends to geodesic triangulations on higher-genus surfaces of negative curvature; however, the existence of (any reasonable analog of) piecewise-linear morphs on such surfaces remains unknown.

\subsection*{Acknowledgments}

We thank Anna Lubiw for asking questions about Lemma 5.1 of Chambers \etal~\cite{celp-hmgt-21}, whose answers ultimately led to the discovery of Theorem~\ref{T:planar-morph}, and for other helpful feedback.
% We also thank Timothy Chan for helping us reduce the upper bound on the number of steps needed in Theorem~\ref{T:planar-morph}.
We also thank Yanwen Luo for making us aware of his recent work~\cite{l-sgts-20, lwz-dsgtg-21, lwz-dsgtf-21}.
Finally, we thank the anonymous reviewers for their comments and helpful suggestions for improvement.

% ------------------------------------------------------------------------------------------
%\bibliographystyle{llncs2e-GD/splncs04}
\bibliographystyle{newuser-doi}
\bibliography{bib/topology,bib/algorithms,bib/geom}

% ------------------------------------------------------------------------------------------
\appendix

\section{Some Bad Examples}
\label{A:no-torus-fg}
\label{A:no-torus-sf}

Here we provide several concrete examples showing that barycentric methods for drawing and morphing planar graphs do not immediately generalize to graphs on the torus.

First, we give an infinite family of non-realizable positive weight vectors.  Let $G$ be \emph{any} graph on the torus, and consider the weight vector $\lambda$ that assigns weight $2$ to a single dart $d$ and weight~$1$ to every dart other than $d$.  (There is obviously nothing special about the values $1$ and $2$ here.)

\begin{lemma}
$\lambda$ is not a realizable weight vector for $G$. 
\end{lemma}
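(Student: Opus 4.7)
The plan is to derive a single necessary condition from linear system \eqref{eq:ggt} that cannot be satisfied by any legitimate coordinate representation, regardless of the structure of $G$ or the choice of translations $\tau$. First, I would sum \eqref{eq:ggt} over every vertex $u$. The right-hand side remains $(0,0)$, and on the left side each dart of $G$ appears exactly once (in the equation belonging to its tail), giving
\[
  \sum_{d} \lambda_d\bigl(p_{\Head(d)} - p_{\Tail(d)} + \tau_d\bigr) \;=\; (0,0).
\]
I would then pair the two darts of each edge and use the antisymmetry $\tau_{\Rev(d)} = -\tau_d$. For an edge with darts $d:\arc{u}{v}$ and $\Rev(d):\arc{v}{u}$, the two terms telescope to $(\lambda_d - \lambda_{\Rev(d)})(p_v - p_u + \tau_d)$, so the summed equation reduces to $\sum_{\text{edges}}(\lambda_d - \lambda_{\Rev(d)})(p_v - p_u + \tau_d) = (0,0)$.

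For the specific weight vector in the lemma, $\lambda_d = \lambda_{\Rev(d)}$ on every edge except the unique edge containing the distinguished dart $d_0:\arc{u_0}{v_0}$, where $\lambda_{d_0} - \lambda_{\Rev(d_0)} = 1$. All but one summand therefore vanish, and the entire identity collapses to the single scalar condition
\[
  p_{v_0} - p_{u_0} + \tau_{d_0} \;=\; (0,0).
\]

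The last step is to observe that this equation can never hold in a valid coordinate representation. By definition, $p_{v_0} - p_{u_0} + \tau_{d_0}$ is the displacement vector of the chosen lift of $d_0$ in the universal cover $\Univ\Gamma$; demanding it to be zero forces the geodesic drawing of $d_0$ to be a single point. If $u_0 \ne v_0$, this places two distinct vertices at the same point of $\Torus$, violating the requirement that vertices map to distinct points. If $u_0 = v_0$ (so $d_0$ is a loop), the condition becomes $\tau_{d_0} = 0$, which makes the loop contractible and hence not a well-defined geodesic on $\Torus$. Either conclusion contradicts the definition of a geodesic drawing, so no positions $P$ and translations $\tau$ can satisfy \eqref{eq:ggt} for this $\lambda$, proving non-realizability.

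I do not anticipate any serious obstacle. The telescoping over edges is forced by $\tau_{\Rev(d)} = -\tau_d$, and the geometric meaning of $p_v - p_u + \tau_d$ as the lifted edge vector is immediate from the coordinate representation. The only mild care is in handling the loop case $u_0 = v_0$ separately, where the obstruction appears as $\tau_{d_0} = 0$ rather than as coincident vertices.
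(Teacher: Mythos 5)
Your proof is correct and lands on the same final contradiction as the paper's---the lifted edge vector $p_{v_0}-p_{u_0}+\tau_{d_0}$ of the reweighted dart would have to vanish---but reaches it by a genuinely different route. The paper deletes row $u$ from $L^\lambda P = H^\lambda$, observes that the truncated system coincides with that of the all-ones weight vector, concludes that any solution must equal the symmetric Tutte drawing $\Gamma^1$, and reads the degeneracy off the deleted row; the impossibility is then grounded in the fact that edges of \emph{symmetric}-weight Tutte drawings have nonzero length \cite{ggt-domam-06}. You instead sum all $n$ equations of \eqref{eq:ggt} and telescope over the two darts of each edge using $\tau_{\Rev(d)}=-\tau_d$, which isolates the single asymmetric edge directly; this is the same computation underlying the paper's morphability condition (it is $(1,\dots,1)\,(L^\lambda P - H^\lambda)$), and it shows more generally that any weight vector that is symmetric except on a single dart fails this necessary condition. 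The one step to tighten is the last one: a solution $P$ of the linear system is a priori just a list of points, not a ``valid coordinate representation,'' so $p_{v_0}-p_{u_0}+\tau_{d_0}=(0,0)$ contradicts nothing by definition alone. You must explicitly invoke the theorem stated in Section~\ref{S:defs} (citing Gortler, Gotsman, and Thurston) that every solution of system \eqref{eq:ggt} with positive weights yields a convex, hence non-degenerate, drawing; the paper's detour through the truncated system exists precisely so that the contradiction rests on the classical symmetric case rather than on that asymmetric statement. With that citation made explicit your argument is complete, and your separate handling of the loop case ($\tau_{d_0}=(0,0)$ would make the loop contractible, contradicting essential simplicity) is a detail the paper leaves implicit.
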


\begin{proof}
Let $u = \Tail(d)$ and $v = \Head(d)$.  For the sake of argument, suppose the linear system $L^\lambda P = H^\lambda$ has a solution; let $\Gamma^\lambda$ be the resulting drawing.  This system remains solvable if we remove row $u$ and arbitrarily fix $p_u$~\cite{sf-ppc2m-04}.  All dart weights in this truncated linear system are equal to~$1$, which implies that the drawing $\Gamma^\lambda$ is identical to the Tutte drawing $\Gamma^1$ for the all-$1$s weight vector.  Comparing the two linear systems, we conclude that $p_v - p_u + \tau_d = (0,0)$; that is, the edge of~$d$ has length zero in $\Gamma^\lambda = \Gamma^1$.  But this is impossible; every edge in a Tutte drawing has non-zero length \cite[Lemma B.5]{ggt-domam-06}.
\end{proof}

Next, we give an example of two realizable weight vectors for the same torus graph whose averages are not realizable.  Consider the toroidal drawings of $K_7$ shown in Fig.~\ref{F:bad-K7s}, which differ only in the position of vertex $2$.  We computed mean-value weights $\lambda$ and $\mu$ for these drawings, normalized so that the weights of all edges leaving each vertex sum to $1$~\cite{f-mvc-03,hf-mvcap-06}.  Routine calculations (which we implemented in Python) now imply that the average weight $(\lambda+\mu)/2$ is not realizable.

\begin{figure}[htb]
\centering
	\includegraphics[width=0.25\linewidth]{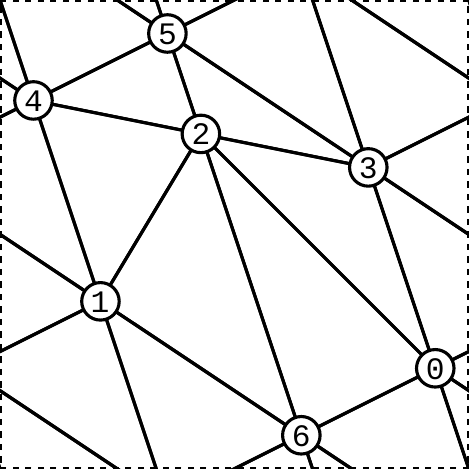}	
	\qquad\qquad
	\includegraphics[width=0.25\linewidth]{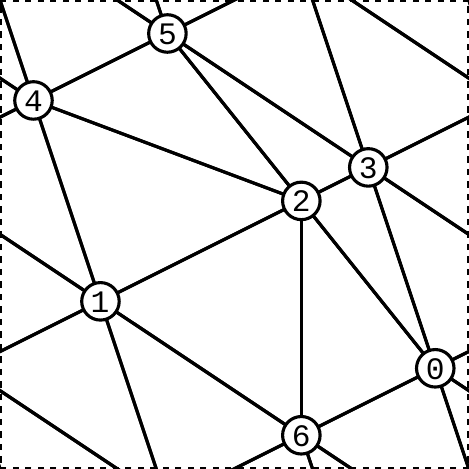}	
\caption{Isotopic drawings of $K_7$ whose normalized mean-value weights are not morphable.}
\label{F:bad-K7s}
\end{figure}

Finally, we consider Steiner and Fischer’s approach \cite{sf-ppc2m-04} of fixing a single vertex, which restores the Laplacian linear system to full rank.  The top row of Fig.~\ref{F:bad-sf-morph} shows two isotopic drawings of a $12\times 12$ toroidal grid, one with a single row of vertices shifted $1/2$ to the left, the other with a single column of vertices shifted $1/2$ downward.  Let $\lambda^\leftarrow$ and $\lambda^\downarrow$ respectively denote the normalized mean-value weights for these drawings~\cite{f-mvc-03,hf-mvcap-06}.  The bottom left image in Fig.~\ref{F:bad-sf-morph} shows the Steiner-Fischer drawing for the weight $\lambda = (2 \lambda^\leftarrow + \lambda^\downarrow)/3$, with the red edges indicating the fixed vertex.  This drawing is clearly not crossing-free; it also follows that the weight vector $\lambda$ is not realizable.

The bottom right of Fig.~\ref{F:bad-sf-morph}  shows the corresponding Floater drawing for the realizable weight vector $\mu = (2 \mu^\leftarrow + \mu^\downarrow)/3$, where $\mu^\leftarrow$ and $\mu^\downarrow$ are \emph{morphable} weights derived by rescaling $\lambda^\leftarrow$ and~$\lambda^\downarrow$, as described in Lemma \ref{L:can-morphabilify}.

\begin{figure}[h]\centering
	\includegraphics[width=0.4\linewidth]{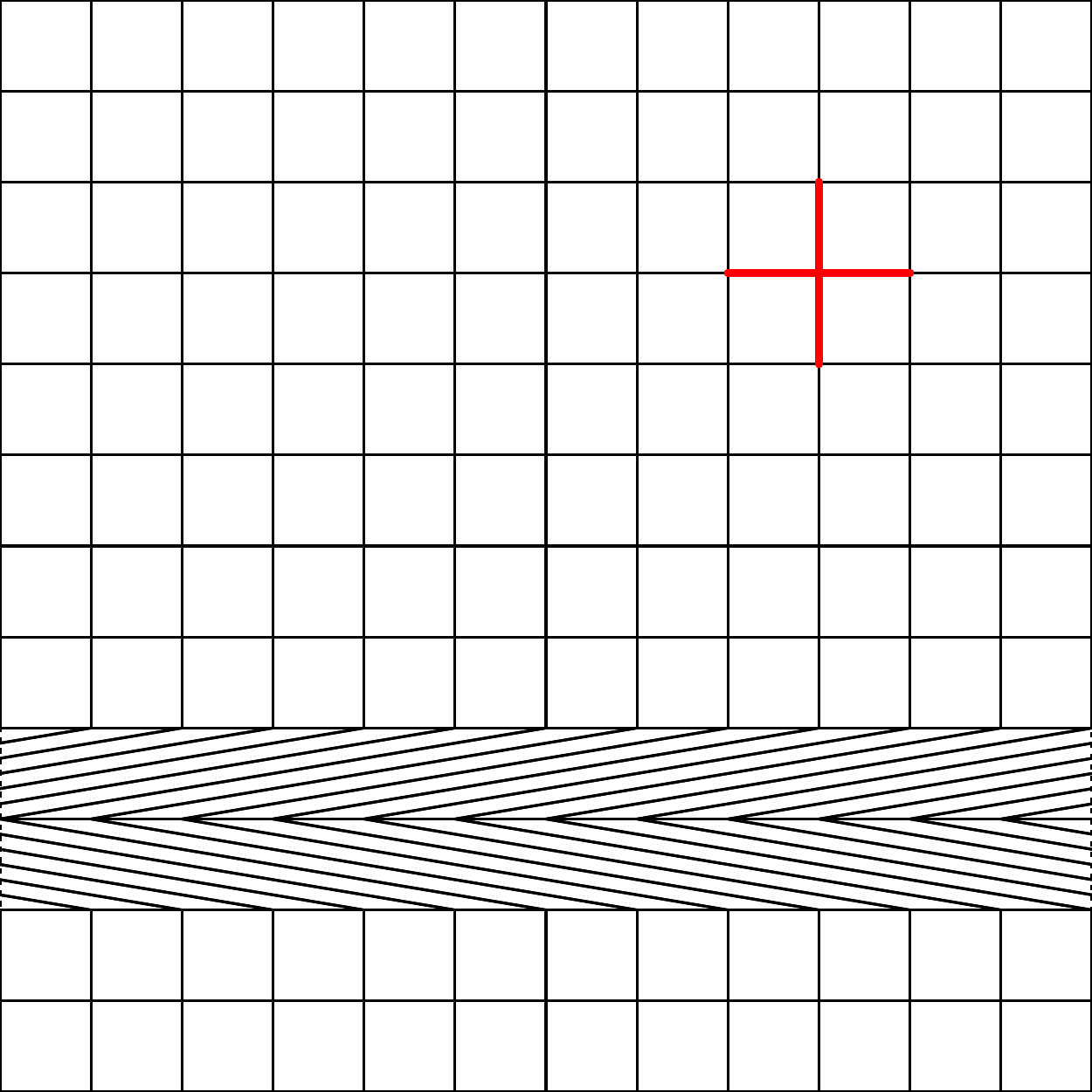}	
	\quad
	\includegraphics[width=0.4\linewidth]{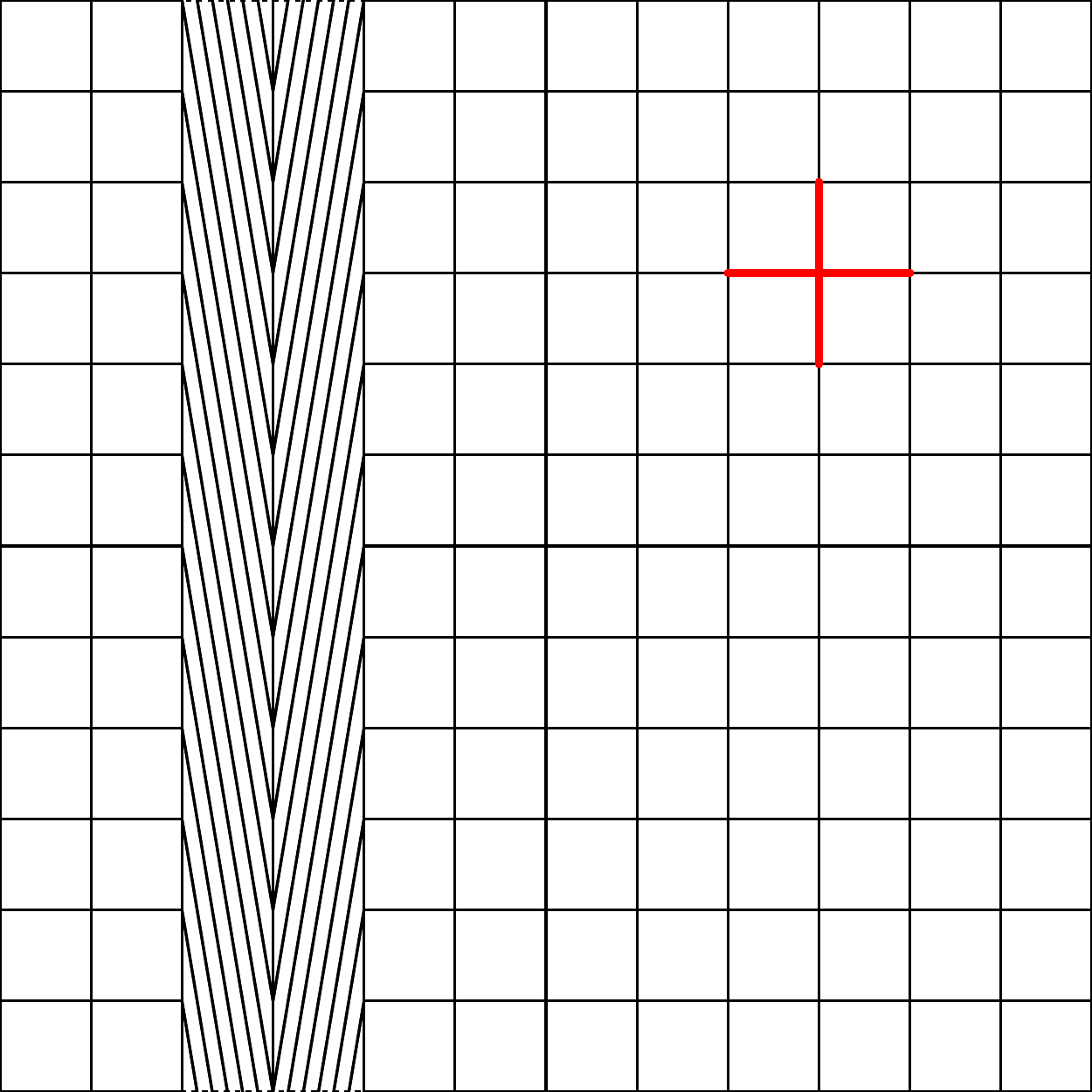}
	\\[3ex]
	\includegraphics[width=0.4\linewidth]{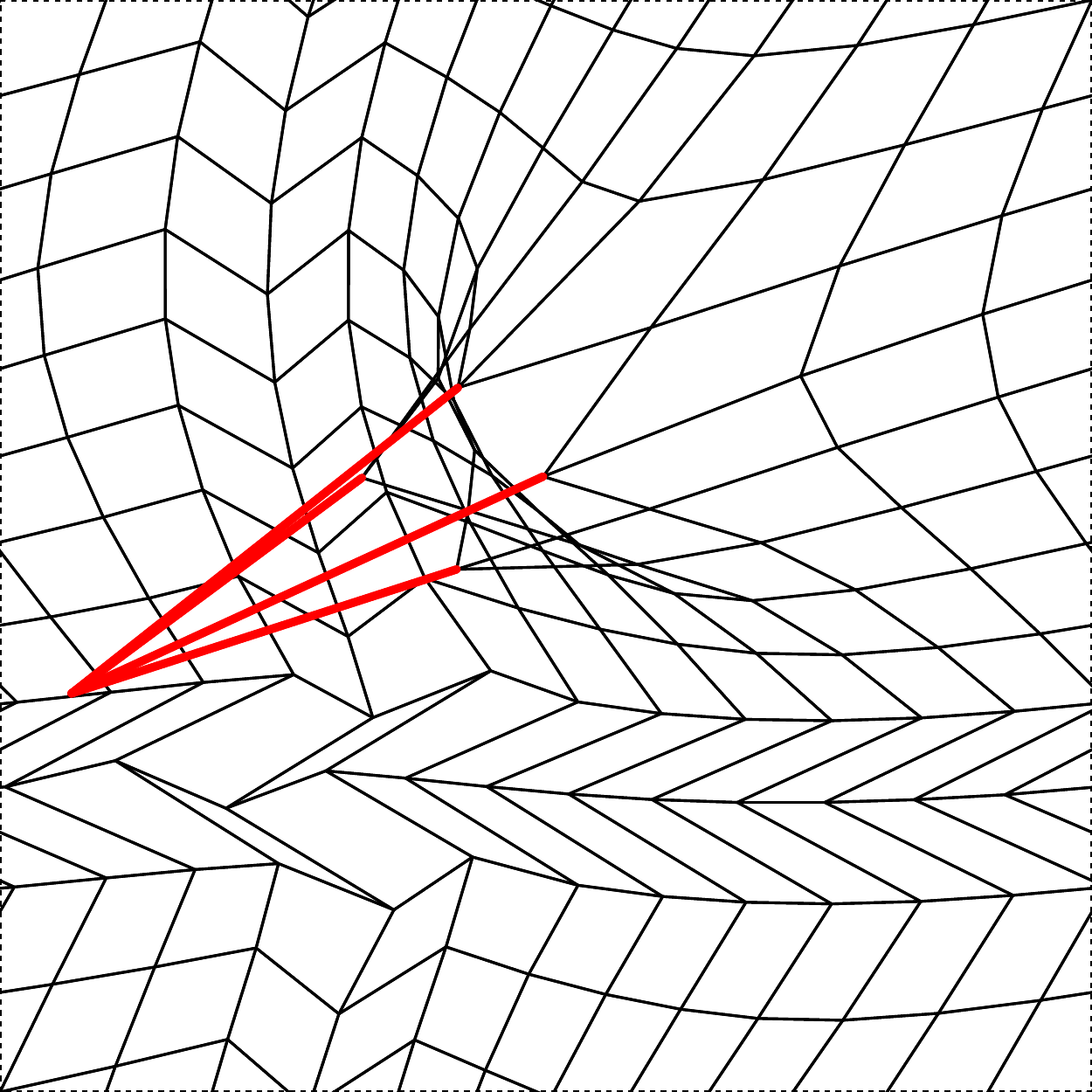}	
	\quad
	\includegraphics[width=0.4\linewidth]{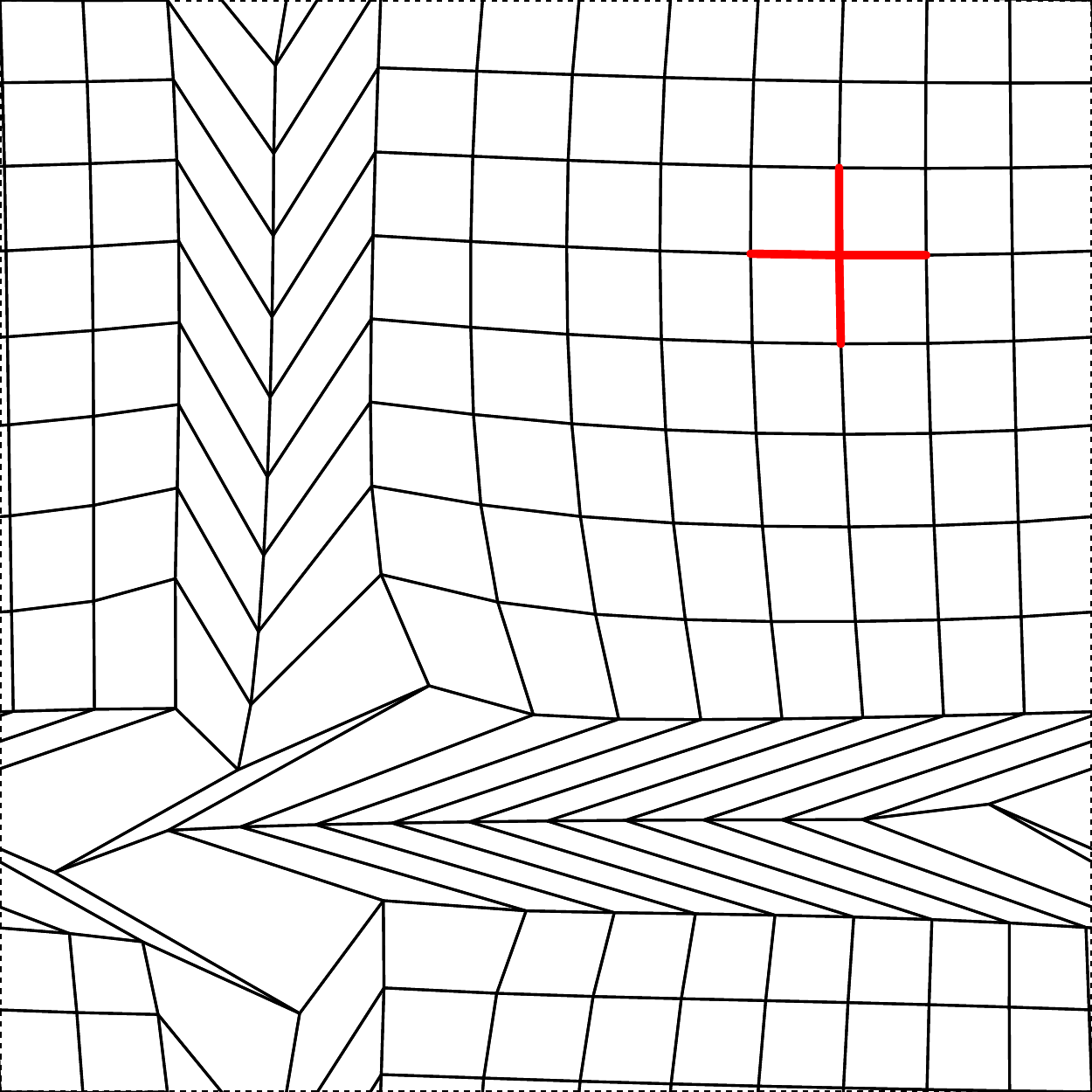}
\caption{A bad example of fixed-vertex weight interpolation; see the text for explanation.}
\label{F:bad-sf-morph}
\end{figure}

Steiner and Fischer claim \cite[Section 2.2.1]{sf-ppc2m-04} that their drawings have no “foldovers'' except possibly at the fixed vertex and its neighbors.  Here a ``foldover'' is a vertex in the drawing whose incident faces overlap (where formally, faces are determined by the reference drawing used to define the translation vectors).  Close examination of Fig.~\ref{F:bad-sf-morph} shows that this claim is incorrect.

\section{Trying to Morph Torus Graphs Edge by Edge}
\label{A:no-torus-edgebyedge}

The following lemma generalizes Lemma~\ref{L:tutte-parallel-plane} to the toroidal setting; it also generalizes Lemma 5.1 of Chambers \etal~\cite{celp-hmgt-21} to work for \emph{asymmetric} weights.

\begin{lemma}
\label{L:tutte-parallel-torus}
Let $\lambda$ and $\mu$ be arbitrary realizable weight vectors such that $\lambda_{d} \ne \mu_{d}$ or $\lambda_{\Rev(d)} \ne \mu_{\Rev(d)}$ for some dart $d$, and $\lambda_{d'} = \mu_{d'}$ for all darts $d' \notin \Set{d,\Rev(d)}$.  For every vertex $w$, the vector $p^\mu_w - p^\lambda_w$ is parallel to the drawing of $d$ in $\Tutte{\Gamma}{\lambda}$.
\end{lemma}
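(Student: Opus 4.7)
The plan is to mimic the proof of Lemma~\ref{L:tutte-parallel-plane}, but adapted to the toroidal linear system~\eqref{eq:ggt}, while being careful about the fact that now there is a two-dimensional family of solutions (differing by translation) rather than a unique one.

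First I would rotate the coordinate system so that the dart $d$ is drawn parallel to the $x$-axis in $\Tutte{\Gamma}{\lambda}$. Writing $d : \arc{u}{v}$, the horizontality assumption becomes $y^{\lambda}_v - y^{\lambda}_u + (\tau_d)_y = 0$, where $y^{\lambda}_i$ denotes the $y$-coordinate of $p^{\lambda}_i$ and $(\tau_d)_y$ is the $y$-component of the translation vector $\tau_d$. I would then project linear system~\eqref{eq:ggt} for $\lambda$ onto the $y$-axis, producing
\begin{equation*}
\sum_{v}\sum_{d':\arc{i}{v}} \lambda^{\phantom{.}}_{d'}\bigl(y^{\lambda}_v - y^{\lambda}_i + (\tau_{d'})_y\bigr) = 0 \qquad \text{for each vertex $i$.}
\end{equation*}
This is exactly the linear equation given by row $i$ of $L^{\lambda} P = H^{\lambda}$ applied to the $y$-column of $P$.

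Next, I would argue that replacing $\lambda$ with $\mu$ in this system changes at most the two rows indexed by $u$ and $v$, since $\lambda$ and $\mu$ agree on all other darts. In the row for $u$, the changed coefficient $\mu_d - \lambda_d$ multiplies $y^{\lambda}_v - y^{\lambda}_u + (\tau_d)_y = 0$ by the horizontality of $d$. In the row for $v$, the dart $\Rev(d):\arc{v}{u}$ has translation $-\tau_d$, so the changed coefficient $\mu_{\Rev(d)} - \lambda_{\Rev(d)}$ multiplies $y^{\lambda}_u - y^{\lambda}_v - (\tau_d)_y = 0$. Hence the vector $y^{\lambda}$ satisfies the $y$-projected system for $\mu$ as well.

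Finally, I would invoke the fact (stated in Section~\ref{S:defs}) that $L^{\mu}$ has rank $n-1$, so the $y$-projected system for $\mu$ has a one-dimensional affine solution space; any two solutions differ by addition of a constant. Since $\mu$ is realizable, there exist coordinate representations $(P^{\mu},\tau)$ of $\Tutte{\Gamma}{\mu}$, and the $y$-column of any such $P^{\mu}$ is one such solution. Therefore $y^{\mu}_w - y^{\lambda}_w$ is the same constant $c$ for every vertex $w$. Because the realization of $\mu$ is only determined up to a global translation, I would normalize by subtracting $(0,c)$ from every vertex position of $\Tutte{\Gamma}{\mu}$, after which $y^{\mu}_w = y^{\lambda}_w$ for every $w$. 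The vectors $p^{\mu}_w - p^{\lambda}_w$ then have zero $y$-component and so are parallel to $d$. The main subtlety compared to the planar case is exactly this handling of the translational ambiguity; everything else is a direct transcription of the planar argument using the toroidal linear system.
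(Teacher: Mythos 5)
Your proof is correct and follows essentially the same route as the paper's: the paper projects onto a vector $\sigma$ orthogonal to the drawn dart and verifies $L^{\mu} Z^{\lambda} = X^{\mu}$ via an explicit rank-one update of the Laplacian, whereas you rotate coordinates and check the two affected rows directly --- the same computation in different notation. The only cosmetic difference is how the translational ambiguity is resolved: the paper pins both solutions down by requiring $z^{\lambda}_n = z^{\mu}_n = 0$ (the standard fixed-vertex normalization, under which your constant $c$ is automatically zero), while you absorb $c$ into the choice of translation for $\Gamma^{\mu}$ at the end; these are equivalent.
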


\begin{proof}%[Toroidal version]
Suppose $d$ has tail $u$ and head $v$.  By the definition of crossing vectors, dart $d$ appears in $\Tutte{\Gamma}{\lambda}$ as the projection of a dart in the universal cover from $p^\lambda_u$ to $p^\lambda_v + \tau_{d}$.  Fix a non-zero vector $\sigma \in \Real^2$ orthogonal to the vector $p^\lambda_v - p^\lambda_u + \tau_{d}$ and thus orthogonal to darts $\Set{d,\Rev(d)}$ in~$\Tutte{\Gamma}{\lambda}$.  For each vertex $i$, let $z^\lambda_i = p^\lambda_i \cdot \sigma$ and $z^\mu_i= p^\mu_i \cdot \sigma$, and for each dart $d'$, let $\chi_{d'} = \tau_{d'} \cdot \sigma$.  Our choice of $\sigma$ implies that $\chi_{d} = z^\lambda_u - z^\lambda_v$.  We need to prove that $z^\lambda_i = z^\mu_i$ for every vertex $i$.

Let $X^\lambda = H^\lambda \cdot \sigma$ and $X^\mu = H^\mu \cdot \sigma$. The real vector $Z^\lambda = (z^\lambda_i)^{\phantom{.}}_i$ is a solution to the linear system $L^\lambda Z = X^\lambda$; in fact, $Z^\lambda$ is the \emph{unique} solution such that $z^\lambda_n = 0$. Similarly, $Z^\mu = (z^\mu_i)^{\phantom{.}}_i$ is the unique solution to an analogous equation $L^{\mu} Z = X^\mu$ with $z^\mu_n = 0$.
We will prove that $L^{\mu} Z^\lambda = X^{\mu}$, so that in fact $Z^\lambda = Z^\mu$.

Let $\delta = \mu_{d} - \lambda_{d}$ and $\e = \mu_{\Rev(d)} - \lambda_{\Rev(d)}$.  The matrices $L^{\lambda}$ and $L^{\mu}$ differ in only four locations:
\[
	L^{\mu}_{ij} - L^{\lambda}_{ij} = \begin{cases}
		\delta & \text{if $(i,j) = (u,u)$} \\
		-\delta & \text{if $(i,j) = (u,v)$} \\
		-\e & \text{if $(i,j) = (v,u)$} \\
		\e & \text{if $(i,j) = (v,v)$} \\
		0 & \text{otherwise}
	\end{cases}
\]
More concisely, we have $L^{\mu} = L^\lambda + (\delta\Unit_u - \e\Unit_v)\,(\Unit_u - \Unit_v)^T$.  Similar calculations imply $H^{\mu} = H^\lambda + \tau_{d}(\delta\Unit_u - \e\Unit_v)$ and therefore $X^{\mu} = X^\lambda + \chi_{d}(\delta\Unit_u - \e\Unit_v)$.  It follows that 
\begin{align*}
	L^{\mu} Z^\lambda
	&=
	L^\lambda Z^\lambda + (\delta\Unit_u - \e\Unit_v)\,(\Unit_u - \Unit_v)^T\, Z^\lambda 
\\	&=
	X^\lambda + (\delta\Unit_u - \e\Unit_v)\,(z^\lambda_u - z^\lambda_v)
\\	&=
	X^\lambda + (\delta\Unit_u - \e\Unit_v)\,\chi_{d}
\\	&=
	X^{\mu},
\end{align*}
completing the proof.
\end{proof}

Colin de Verdière~\cite{c-crgtd-91} showed that all \emph{symmetric} positive weights are realizable on the flat torus (see also \cite{d-eppgc-04,ggt-domam-06,l-dafe-04,hs-seshm-15}); Chambers \etal~\cite{celp-hmgt-21} exploit this observation in their morphing algorithm.  In the asymmetric case, however, it is unclear when changing the weights for a single edge in a realizable weight vector results in another realizable weight vector. We obtain the following partial result, via the analyses of Lemmas~\ref{L:can-morphabilify} and \ref{L:tutte-parallel-torus}.

\begin{lemma}
\label{L:torus-edge-tweak}
Let $\lambda$ be a realizable weight vector, and let $\alpha$ be a positive row vector such that $\alpha L^\lambda = (0,\dots,0)$ and $\alpha H^\lambda = (0,0)$.  Let $\mu$ be another positive weight vector such that $\lambda_{d} \ne \mu_{d}$ or $\lambda_{\Rev(d)} \ne \mu_{\Rev(d)}$  for some dart $d$, and $\lambda_{d'} = \mu_{d'}$ for all darts $d' \notin \Set{d,\Rev(d)}$.  Set $\delta := \mu_{d} - \lambda_{d}$ and $\e := \mu_{\Rev(d)} - \lambda_{\Rev(d)}$.  If $\delta\alpha_{\Tail(d)} = \e\alpha_{\Head(d)}$, then $\mu$ is realizable.
\end{lemma}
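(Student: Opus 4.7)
The plan is to use the given positive vector $\alpha$ to rescale the rows of the linear system $L^\mu P = H^\mu$ so that the resulting rescaled weight vector is \emph{morphable} in the sense of Section 4; realizability of $\mu$ will then follow from Lemma~\ref{L:morphable-is-realizable}.

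First, I would verify, using the rank-one update formulas already derived in the proof of Lemma~\ref{L:tutte-parallel-torus}, namely
\[
L^{\mu} = L^\lambda + (\delta\Unit_u - \e\Unit_v)(\Unit_u - \Unit_v)^T
\qquad\text{and}\qquad
H^{\mu} - H^\lambda = (\delta\Unit_u - \e\Unit_v)\,\tau_d,
\]
that the hypothesis $\delta\alpha_{\Tail(d)} = \e\alpha_{\Head(d)}$ implies both $\alpha L^\mu = (0,\dots,0)$ and $\alpha H^\mu = (0,0)$.  Indeed, left-multiplying each rank-one term by $\alpha$ produces the scalar factor $\delta\alpha_u - \e\alpha_v$, which vanishes by hypothesis, while the leading terms $\alpha L^\lambda$ and $\alpha H^\lambda$ vanish by the assumption on $\alpha$.

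Second, I would define a rescaled weight vector $\mu'$ by $\mu'_d := \alpha_{\Tail(d)}\mu_d$ for each dart $d$.  A direct expansion of the definitions in~\eqref{eq:ggt-laplacian} shows that scaling every dart weight leaving vertex $i$ by the positive constant $\alpha_i$ simply multiplies the $i$th row of both $L^\mu$ and $H^\mu$ by $\alpha_i$.  Hence the column sums of $L^{\mu'}$ are exactly the entries of $\alpha L^\mu$, and the column sums of $H^{\mu'}$ are exactly the entries of $\alpha H^\mu$; both are zero by the previous paragraph, so $\mu'$ is morphable and therefore realizable by Lemma~\ref{L:morphable-is-realizable}.

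Finally, because every $\alpha_i$ is positive, the two linear systems $L^\mu P = H^\mu$ and $L^{\mu'}P = H^{\mu'}$ differ only by a positive rescaling of each row and so have identical solution sets.  Any position matrix realizing $\mu'$ therefore realizes $\mu$, which completes the proof.  The only substantive step is the first one, and given the rank-one decomposition already in hand from Lemma~\ref{L:tutte-parallel-torus}, I do not anticipate any real obstacle; the key conceptual content is merely that the algebraic hypothesis $\delta\alpha_u = \e\alpha_v$ is precisely the condition needed to annihilate the rank-one perturbation when left-multiplied by $\alpha$.
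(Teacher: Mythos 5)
Your proposal is correct and follows essentially the same route as the paper's own proof: both use the rank-one decompositions of $L^\mu - L^\lambda$ and $H^\mu - H^\lambda$ from Lemma~\ref{L:tutte-parallel-torus} to show that the hypothesis $\delta\alpha_u = \e\alpha_v$ forces $\alpha L^\mu = (0,\dots,0)$ and $\alpha H^\mu = (0,0)$, and then conclude that the $\alpha$-rescaled weight vector is morphable (hence realizable), which realizes $\mu$ as well. The only difference is that you spell out the final rescaling step, which the paper compresses into the phrase ``a suitable scaling of $\mu$ is morphable.''
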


\begin{proof}
Suppose $d$ has tail $u$ and head $v$. The analysis of Lemma~\ref{L:tutte-parallel-torus} gives us
\begin{align*}
	L^{\mu} &= L^\lambda + (\delta\Unit_u - \e\Unit_v)\,(\Unit_u - \Unit_v)^T \\
	H^{\mu} &= H^\lambda + \tau_{d}(\delta\Unit_u - \e\Unit_v)
\end{align*}
Because $\alpha L^\lambda = (0,\dots,0)$ and $\alpha H^\lambda = (0,0)$, we immediately have $\alpha L^{\mu} = (\delta\alpha_u - \e\alpha_v)(\Unit_u - \Unit_v)^T$ and $\alpha H^{\mu} = x_{d}(\delta\alpha_u - \e\alpha_v)$.

If $\delta\alpha_u = \e\alpha_v$, then $\alpha L^{\mu} = (0,\dots,0)$ and $\alpha X^{\mu} = (0,0)$.  It follows that a suitable scaling of $\mu$ is morphable, and therefore realizable, which implies that $\mu$ itself is also realizable.
\end{proof}

In particular, when $\lambda$ is symmetric, then we can choose $\alpha$ to be the all-$1$s vector, which implies that any weight $\mu$ satisfying the conditions of Lemma \ref{L:torus-edge-tweak} is also symmetric.

\end{document}